\documentclass[11pt,a4paper,reqno]{amsart}

\usepackage{amsmath,amsthm,amssymb}
\usepackage{booktabs,array,longtable,geometry}
\usepackage[utf8]{inputenc}
\usepackage[T1]{fontenc}
\usepackage{hyperref,xcolor,enumitem,tikz,graphicx,placeins}
\hypersetup{colorlinks=true,linkcolor=black,urlcolor=black,citecolor=black}

\newtheorem{proposition}{Proposition}[section]
\newtheorem{theorem}[proposition]{Theorem}
\newtheorem{corollary}[proposition]{Corollary}

\theoremstyle{definition}
\newtheorem{definition}[proposition]{Definition}
\newtheorem{example}[proposition]{Example}
\theoremstyle{remark}
\newtheorem{remark}[proposition]{Remark}

\newcommand{\field}{\mathrm{field}}
\newcommand{\prop}[1]{\textbf{P#1}}
\newcommand{\comp}[1]{\overline{#1}}
\newcommand{\Id}{1'}
\newcommand{\Univ}{1}
\newcommand{\pabbr}[1]{\textsf{#1}}
\newcommand{\catentails}{\Rightarrow}
\newcommand{\catequiv}{\Longleftrightarrow}
\newcommand{\catnotentails}{\not\Rightarrow}

\title[A Catalogue of Properties of Binary Relations]{A Catalogue of Properties of Binary Relations: Entailments, Incompatibilities, and Independence Results}
\author{Magnus Boman}
\thanks{Affiliation: Karolinska Institutet, Stockholm, Sweden.}
\subjclass[2020]{03B45, 03G15, 06A06, 68V20}
\keywords{binary relations, entailment, relation algebra, modal correspondence, Euclideanness, anti-transitivity, independence, formal verification}

\begin{document}

\begin{abstract}
We study fifteen properties of binary relations that have established uses in
modal logic, order and preference theory, and relation algebra.  The organising
question is pragmatic: once some properties of a relation are known, which
further properties follow, which combinations force degeneracy, and which
properties remain independent?  We give a proved Horn basis of elementary and
compound entailments, explicit countermodels for non-entailments, and a
relation-algebraic translation of all fifteen properties.  The modal discussion
includes the usual Scott--Lemmon correspondences and the logic of transitive
dense frames studied by Ghilardi and Mints.  Exhaustive finite enumeration and
targeted model search are used for discovery, while the catalogue is
certified in Lean.  A kernel-checked coverage calculation ranges over all
$2^{15}=32{,}768$ antecedent sets and all fifteen possible consequents.  Its
51-rule Horn manifest is coupled definitionally to Lean proofs of semantic
soundness, and its 49 witness rows are coupled definitionally to Lean-certified
finite or infinite relations.  Consequently, over non-empty domains, the Horn
closure is complete for positive entailment among the fifteen selected
properties, and the consistency classification is complete for their positive
combinations.
\end{abstract}

\maketitle

\section{Introduction}

Binary relations are routinely described by collections of structural
properties: reflexivity, symmetry, transitivity, connexity, Euclideanness,
density and convergence.  In applications one rarely starts with all
of these properties at once.  Rather, one knows that a relation has some
properties and wants to know what follows automatically, what still needs to
be checked, and which combinations are impossible or degenerate.  The purpose
of this paper is to organise that information for a deliberately selected
family of fifteen properties.

The selection is \emph{pragmatic}, not syntactic.  We do not attempt to include
every first-order property that one can construct from a single binary
relation symbol.  Instead we collect properties with an established role in
modal frame theory, order and preference theory, or the algebraic theory of
relations; see Lemmon and Scott~\cite{LemmonScott1977}, Segerberg~\cite{Segerberg1971},
Chellas~\cite{Chellas1980}, Blackburn, de Rijke and Venema~\cite{BlackburnEtAl2001},
Fra\"iss\'e~\cite{Fraisse}, Fishburn~\cite[Chs.~1--2]{Fishburn1970},
Roberts~\cite[Ch.~2]{Roberts1979}, Tarski~\cite{Tarski1941},
Maddux~\cite{Maddux1978,Maddux}, and Schmidt~\cite{Schmidt2010}.  This criterion is meant
to make the catalogue useful as a look-up device rather than merely exhaustive
in a formal language.

The boundary is necessarily selective.  Classical preference theory contains
important further relational conditions that are not among the fifteen.  For
example, after introducing many of the properties used here, Fishburn
\cite[Sec.~2.4]{Fishburn1970} adds conditions used in the theory of interval
orders and semiorders.  Those specialised conditions are natural candidates
for a larger preference-theoretic extension of the catalogue, but including
them here would change the deliberately cross-disciplinary scope of the
present selection.

There is important prior work with a closely related encyclopaedic aim.
Burghardt~\cite{Burghardt2018} studies a broader collection of relational
properties computationally, enumerating finite relations to generate candidate
laws and then proving the surviving laws for arbitrary relations.  His later
work~\cite{Burghardt2021} develops an algebra of lifted relation properties and
combines counterexample generation with automated theorem proving.  Burghardt
therefore studies a broader collection of relational properties computationally
and algebraically; the present work isolates fifteen properties prominent in
modal, order-theoretic and relation-algebraic contexts, determines their
entailment structure under explicitly stated assumptions, and integrates this
structure with named modal frame classes.  The contribution is consequently
not that elementary implications such as ``irreflexive and transitive implies
asymmetric'' are individually new, but that a small, practically motivated
fragment is organised and checked in a structured way.

The catalogue is organised around a separation between proof, model search,
and global coverage.  Every positive entailment below has an arbitrary-domain
proof and a Lean formalisation.  Finite enumeration and Z3 are used for
discovery and countermodel search, not as substitutes for proof; finite solver
models are independently re-evaluated and certified in Lean, while infinite
witnesses are formalised directly.  The global Lean certificate uses the exact
numerical rule and witness manifests of the Horn calculation, couples them to
semantic proofs, and checks all $32{,}768$ antecedent cases by kernel
reduction.  The Python enumeration and Z3 search therefore lie outside the
trust boundary of the completeness claim.  Section~\ref{sec:verification}
gives the precise certificate structure.

Throughout, \emph{non-transitive} means simply ``not transitive'', whereas
\emph{anti-transitive} denotes the universal condition in \prop{8}.  We follow
Burghardt~\cite{Burghardt2018} in preferring \emph{anti-transitive} to the
potentially ambiguous \emph{intransitive}.  We use \emph{Euclideanness} for \prop{13}, the property traditionally
called \emph{right Euclidean}; the left Euclidean variant is not included in the
catalogue.

\section{The fifteen properties}

Throughout, $D$ is a non-empty set and $R\subseteq D\times D$.  Quantifiers
range over $D$, and $xRy$ abbreviates $(x,y)\in R$.  The relation itself is
allowed to be empty.  The \emph{field} $\field(R)$ is the set of points that
occur in at least one ordered pair of $R$.

\begin{definition}[The fifteen properties]\label{def:properties}
\leavevmode
\begin{enumerate}[label=\textbf{P\arabic*},leftmargin=2.5em,noitemsep]
  \item \textbf{Reflexivity} (\pabbr{Refl}): $\forall x\;Rxx$.
  \item \textbf{Irreflexivity} (\pabbr{Irrefl}): $\forall x\;\neg Rxx$.
  \item \textbf{Seriality} (\pabbr{Ser}): $\forall x\;\exists y\;Rxy$.
  \item \textbf{Symmetry} (\pabbr{Sym}): $\forall x,y\,(Rxy\to Ryx)$.
  \item \textbf{Antisymmetry} (\pabbr{AntiSym}): $\forall x,y\,(Rxy\wedge Ryx\to x=y)$.
  \item \textbf{Asymmetry} (\pabbr{Asym}): $\forall x,y\,(Rxy\to\neg Ryx)$.
  \item \textbf{Transitivity} (\pabbr{Trans}): $\forall x,y,z\,(Rxy\wedge Ryz\to Rxz)$.
  \item \textbf{Anti-transitivity} (\pabbr{AntiTrans}): $\forall x,y,z\,(Rxy\wedge Ryz\to\neg Rxz)$.
  \item \textbf{Negative transitivity} (\pabbr{NegTrans}): $\forall x,y,z\,(\neg Rxy\wedge\neg Ryz\to\neg Rxz)$.
  \item \textbf{Density} (\pabbr{Dense}): $\forall x,y\,(Rxy\to\exists z\,(Rxz\wedge Rzy))$.
  \item \textbf{Strong connexity} (\pabbr{SConn}): $\forall x,y\,(Rxy\vee Ryx)$.
  \item \textbf{Connexity} (\pabbr{Conn}): $\forall x,y\,(Rxy\vee Ryx\vee x=y)$.
  \item \textbf{Euclideanness} (\pabbr{Eucl}): $\forall x,y,z\,(Rxy\wedge Rxz\to Ryz)$.
  \item \textbf{Convergence} (\pabbr{Conv}): $\forall x,y,z\,(Rxy\wedge Rxz\to\exists w\,(Ryw\wedge Rzw))$.
  \item \textbf{Weak connexity} (\pabbr{WConn}): $\forall x,y,z\,(Rxy\wedge Rxz\to Ryz\vee Rzy\vee y=z)$.
\end{enumerate}
\end{definition}

The mnemonic labels are reader aids; the numbered labels \prop{1}--\prop{15}
remain the stable identifiers used throughout the formal catalogue and the
Lean development.

\begin{remark}[Terminology]
The labels for \prop{11} and \prop{12} are not standardised across logical and
preference-theoretic traditions.  Fishburn~\cite[Ch.~1]{Fishburn1970} calls
our \prop{11} \emph{connected} or \emph{complete} and our \prop{12}
\emph{weakly connected}.  Burghardt~\cite{Burghardt2018} calls our \prop{11}
\emph{connex} and our \prop{12} \emph{semi-connex}; other authors use
\emph{total} or \emph{weakly total} in related senses.  There is no single
terminological convention spanning these literatures.  Our convention
reserves \emph{strong connexity} for comparability of every ordered pair,
including the diagonal, and \emph{connexity} for comparability of distinct
points.  Thus strong connexity itself forces reflexivity.  To avoid a further
collision with the preference-theoretic use of \emph{complete}, we call
$D\times D$ the \emph{universal relation} below.
\end{remark}

\begin{remark}[Density versus order-density]\label{rem:density}
Property \prop{10} is the standard relational density condition $R\subseteq
R;R$: every edge factors into two $R$-edges, where $;$ is the relative product
used in Section~\ref{sec:algebra}.  It does \emph{not} require
the interpolating point to be distinct from the endpoints.  Consequently every
reflexive relation is dense (Proposition~\ref{prop:solo}), whereas the usual
notion of order-density for a strict order additionally excludes the endpoints.
Keeping these notions separate prevents several otherwise tempting but false
claims about finite dense relations.
\end{remark}

\section{A relation-algebraic translation}\label{sec:algebra}

The same fifteen conditions can be written compactly in the algebra of binary
relations.  This provides a second language for the catalogue and often exposes
the reason an entailment holds.  We write $\Id$ for the identity relation,
$\Univ$ for the universal relation, $R^\smile$ for converse, $\comp R$ for
complement relative to $D\times D$, juxtaposition with $\cap$ or $\cup$ for the
Boolean operations, and $R;S$ for relative product:
\[
 x(R;S)z \quad\Longleftrightarrow\quad \exists y\,(xRy\wedge ySz).
\]
These are the standard operations of relation algebra; see Tarski
\cite{Tarski1941}, Maddux~\cite{Maddux1978,Maddux}, Hirsch and
Hodkinson~\cite{HirschHodkinson2002}, and the relational treatments of Schmidt
and Str\"ohlein~\cite{SchmidtStroehlein1993} and Schmidt~\cite{Schmidt2010}.
Maddux's 1978 doctoral thesis was supervised by Tarski; his later monograph
provides a substantially expanded systematic treatment.

\begin{center}
\begin{tabular}{@{}cl@{}}
\toprule
Property & Relation-algebraic form \\
\midrule
\prop{1} & $\Id\subseteq R$ \\
\prop{2} & $R\cap\Id=\varnothing$ \\
\prop{3} & $\Id\subseteq R;R^\smile$ \\
\prop{4} & $R=R^\smile$ \\
\prop{5} & $R\cap R^\smile\subseteq\Id$ \\
\prop{6} & $R\cap R^\smile=\varnothing$ \\
\prop{7} & $R;R\subseteq R$ \\
\prop{8} & $R;R\subseteq\comp R$ \\
\prop{9} & $\comp R;\comp R\subseteq\comp R$ \\
\prop{10} & $R\subseteq R;R$ \\
\prop{11} & $\Univ=R\cup R^\smile$ \\
\prop{12} & $\comp\Id\subseteq R\cup R^\smile$ \\
\prop{13} & $R^\smile;R\subseteq R$ \\
\prop{14} & $R^\smile;R\subseteq R;R^\smile$ \\
\prop{15} & $R^\smile;R\subseteq R\cup R^\smile\cup\Id$ \\
\bottomrule
\end{tabular}
\end{center}

The translation is not merely cosmetic.  For example, \prop{1} gives
$\Id\subseteq R$, hence by monotonicity of relative product
$R=\Id;R\subseteq R;R$, which is exactly \prop{10}.  Likewise, if $R=R^\smile$
then $R^\smile;R=R;R^\smile$, so symmetry immediately entails convergence.
We nevertheless give pointwise proofs below because they require no prior
knowledge of relation algebra.

\section{Preliminary observations}

\begin{proposition}[The empty relation]\label{prop:empty}
On a non-empty domain the empty relation satisfies \prop{2}, \prop{4}, \prop{5},
\prop{6}, \prop{7}, \prop{8}, \prop{9}, \prop{10}, \prop{13}, \prop{14}, and
\prop{15}.  It fails \prop{1}, \prop{3}, and \prop{11}; it also fails \prop{12} when $|D|\ge2$, while the empty relation on a singleton satisfies \prop{12}.
\end{proposition}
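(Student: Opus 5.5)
The plan is a direct case check against Definition~\ref{def:properties}, with $R=\varnothing$ and $D$ non-empty. We sort the fifteen properties by the logical shape of their defining formulas.

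\textbf{Satisfied properties.} For \prop{4}, \prop{5}, \prop{6}, \prop{7}, \prop{8}, \prop{10}, \prop{13}, \prop{14} and \prop{15}, the defining formula is a universal conditional. Its antecedent contains at least one positive atom $Rxy$, which is false for every choice of points, so each formula holds vacuously. This applies even to \prop{10} and \prop{14}, whose consequents are existential, because the consequent is never reached. \prop{2} holds because $\neg Rxx$ is true everywhere. \prop{9} is the one case needing a sentence of its own, since its antecedent is negative: the consequent $\neg Rxz$ is true for all $x,z$, so the implication holds regardless of the antecedent. As a cross-check we can use the table of Section~\ref{sec:algebra}: every left-hand side built from $R$ by relative product or intersection is $\varnothing$, and $\comp R;\comp R=\Univ;\Univ=\Univ=\comp R$ confirms \prop{9}.

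\textbf{Failing properties.} We use non-emptiness to pick a point $a\in D$. \prop{1} fails since $\neg Raa$. \prop{11} fails at $x=y=a$ for the same reason. \prop{3} fails because $a$ has no $R$-successor.

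\textbf{Connexity, the only size-dependent case.} If $|D|\ge2$, choose distinct $a,b$. Then $Rab\vee Rba\vee a=b$ is false, so \prop{12} fails. If $D=\{a\}$, every instance of \prop{12} has $x=y=a$, so the disjunct $x=y$ holds and \prop{12} is satisfied.

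The only real obstacle is bookkeeping: we must not overlook that \prop{9} is satisfied for a different reason than the other vacuous cases, and that \prop{1}, \prop{3} and \prop{11} genuinely need $D\neq\varnothing$. Nothing beyond the definitions is needed.
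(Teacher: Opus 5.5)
Your proposal is correct and follows essentially the same route as the paper: a direct check against the definitions, with non-emptiness of $D$ refuting \prop{1}, \prop{3} and \prop{11}, and a pair of distinct points refuting \prop{12}. Two of your choices are slightly tighter than the paper's wording, which groups \prop{9} under ``vacuously true'' and refutes \prop{11} in two cases. You note that \prop{9} holds because its consequent is always true rather than because its antecedent fails, and you refute \prop{11} uniformly at $x=y=a$.
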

\begin{proof}
Every displayed implication in \prop{2}, \prop{4}--\prop{10}, and
\prop{13}--\prop{15} is vacuously true when $R=\varnothing$; in particular,
density is vacuous because there is no pair $xRy$ to interpolate.  Reflexivity
and seriality fail because $D$ is non-empty, while strong connexity and
connexity fail on any two distinct points and, on a singleton, strong connexity
still requires the missing loop.  For connexity on a singleton the identity
disjunct makes \prop{12} true; thus the last assertion concerning \prop{12}
requires $|D|\ge2$.
\end{proof}

\begin{remark}
The singleton exception in Proposition~\ref{prop:empty} is a useful reminder
that cardinality assumptions should be stated explicitly.  In the remainder,
whenever a countermodel uses the empty relation to refute connexity, the domain
has at least two elements.
\end{remark}

Throughout the catalogue, a set of properties is called \emph{consistent}
precisely when it is jointly satisfiable by some relation on a non-empty
domain; no separate syntactic proof relation $\vdash$ is defined.  Accordingly,
\emph{inconsistent} and \emph{unsatisfiable} are used interchangeably.

\begin{proposition}[Monotonicity of entailment]\label{prop:monotonicity}
If $\Gamma\models Q$, then $\Gamma\cup\Sigma\models Q$ for every additional set
of assumptions $\Sigma$.
\end{proposition}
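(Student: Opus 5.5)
The plan is to unfold the semantic definition of entailment fixed just above and argue directly. Here $\Gamma\models Q$ means that every relation $R$ on a non-empty domain $D$ satisfying all properties in $\Gamma$ also satisfies $Q$. Accordingly, I fix an arbitrary non-empty $D$ and an arbitrary $R\subseteq D\times D$, and assume that $R$ satisfies every property in $\Gamma\cup\Sigma$. The goal is to show that $R$ satisfies $Q$.

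The key step is the trivial set-theoretic observation that $\Gamma\subseteq\Gamma\cup\Sigma$. Hence $R$ satisfies every property in $\Gamma$. Since $R$ ranges over the same class of structures as in the hypothesis (the same non-empty domains, with the empty relation still permitted), the hypothesis $\Gamma\models Q$ applies to $R$ and yields that $R$ satisfies $Q$. Because $R$ and $D$ were arbitrary, this gives $\Gamma\cup\Sigma\models Q$.

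I expect no genuine obstacle. The only point needing care is to make sure the class of models is not changed when assumptions are added: the non-emptiness requirement on $D$ is a global convention, not a member of $\Gamma$, so it is preserved. A degenerate case worth noting explicitly is when $\Gamma\cup\Sigma$ is inconsistent in the sense defined above. There the conclusion holds vacuously, which is consistent with the argument, since no $R$ has to be checked. In other words, monotonicity holds because the satisfaction condition is universally quantified over structures, and enlarging the premise set only shrinks the class of structures over which we quantify.
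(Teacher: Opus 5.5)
Your proof is correct and is the same argument as the paper's one-line proof: every model of $\Gamma\cup\Sigma$ is in particular a model of $\Gamma$, so $\Gamma\models Q$ applies to it. Your extra remarks about the fixed non-empty-domain convention and the vacuous inconsistent case are accurate but not needed.
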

\begin{proof}
Every model of $\Gamma\cup\Sigma$ is, in particular, a model of $\Gamma$.
\end{proof}

\begin{remark}[Entailment notation]\label{rem:entailment-notation}
Semantic entailment is written $\Gamma\models Q$.  In the compact catalogue
displays below, $A\catentails B$ is typographical shorthand for
$A\models B$, and $A\catequiv B$ means entailment in both directions.  The
symbol $+$ joins properties conjunctively: thus
$\prop{11}\catentails\prop{1}+\prop{12}$ abbreviates the two consequences
$\prop{11}\models\prop{1}$ and $\prop{11}\models\prop{12}$.  We reserve
$\to$ for implication inside the first-order formulas defining the
properties.  Failure of catalogue entailment is written $A\catnotentails B$.
\end{remark}

For convenient reading, the mnemonic key from Definition~\ref{def:properties}
is repeated here:
\begin{center}
\small
\begin{tabular}{@{}lllll@{}}
\prop{1} \pabbr{Refl} & \prop{2} \pabbr{Irrefl} & \prop{3} \pabbr{Ser} &
\prop{4} \pabbr{Sym} & \prop{5} \pabbr{AntiSym} \\
\prop{6} \pabbr{Asym} & \prop{7} \pabbr{Trans} & \prop{8} \pabbr{AntiTrans} &
\prop{9} \pabbr{NegTrans} & \prop{10} \pabbr{Dense} \\
\prop{11} \pabbr{SConn} & \prop{12} \pabbr{Conn} & \prop{13} \pabbr{Eucl} &
\prop{14} \pabbr{Conv} & \prop{15} \pabbr{WConn}
\end{tabular}
\end{center}

\section{Entailments}\label{sec:entailments}

\subsection{Solo entailments}

\begin{proposition}[The basic solo entailments]\label{prop:solo}
For arbitrary $R$ on non-empty $D$:
\begin{enumerate}[label=(\roman*),noitemsep]
\item \prop{1} (\pabbr{Refl}) $\catentails$
      \prop{3} (\pabbr{Ser}) + \prop{10} (\pabbr{Dense});
\item \prop{4} (\pabbr{Sym}) $\catentails$ \prop{14} (\pabbr{Conv});
\item \prop{6} (\pabbr{Asym}) $\catentails$
      \prop{2} (\pabbr{Irrefl}) + \prop{5} (\pabbr{AntiSym});
\item \prop{8} (\pabbr{AntiTrans}) $\catentails$ \prop{2} (\pabbr{Irrefl});
\item \prop{11} (\pabbr{SConn}) $\catentails$
      \prop{1} (\pabbr{Refl}) + \prop{12} (\pabbr{Conn});
\item \prop{12} (\pabbr{Conn}) $\catentails$ \prop{15} (\pabbr{WConn});
\item \prop{13} (\pabbr{Eucl}) $\catentails$
      \prop{10} (\pabbr{Dense}) + \prop{14} (\pabbr{Conv}) +
      \prop{15} (\pabbr{WConn}).
\end{enumerate}
\end{proposition}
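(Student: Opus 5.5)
The plan is to verify each of the seven items separately by a direct pointwise argument from Definition~\ref{def:properties}. In every case one either instantiates a universal quantifier at a well-chosen tuple or supplies an explicit witness for an existential. No induction or case analysis beyond disjunction elimination is needed. The relation-algebraic forms of Section~\ref{sec:algebra} could serve as a cross-check, but I will give the elementary proofs.

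The first six items are routine. For (i), given $x$, reflexivity provides $Rxx$, so $y=x$ witnesses seriality. For density, given $Rxy$, take $z=x$: then $Rxx$ holds by \prop{1} and $Rxy$ by hypothesis. For (ii), given $Rxy$ and $Rxz$, symmetry gives $Ryx$ and $Rzx$, so $w=x$ witnesses convergence. For (iii), instantiating asymmetry at $y=x$ gives $Rxx\to\neg Rxx$, hence $\neg Rxx$. Antisymmetry then holds vacuously, because under \prop{6} the premise $Rxy\wedge Ryx$ is never satisfied. For (iv), suppose $Rxx$ and instantiate anti-transitivity at $x=y=z$. The premise $Rxx\wedge Rxx$ then yields $\neg Rxx$, a contradiction. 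For (v), instantiating \prop{11} at $y=x$ gives $Rxx\vee Rxx$, which is reflexivity; \prop{12} follows since its conclusion is weakened by the extra disjunct $x=y$. For (vi), given $Rxy$ and $Rxz$, apply \prop{12} to the pair $(y,z)$ and discard the hypotheses.

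Item (vii) is the only one needing a little care, and even that is slight. The trick is to feed the same edge twice into the Euclidean condition. Suppose $Rxy$. Applying \prop{13} to $Rxy\wedge Rxy$ gives $Ryy$. For density, take $z=y$; then $Rxy$ holds by hypothesis and $Ryy$ has just been derived. For convergence, given $Rxy$ and $Rxz$, \prop{13} gives $Ryz$, and applying \prop{13} to $Rxz\wedge Rxz$ gives $Rzz$. Hence $w=z$ is a common successor. For weak connexity, \prop{13} gives $Ryz$ directly, which is the first disjunct.

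I expect no genuine obstacle. The only points requiring care are two degenerate instantiations: the diagonal instances in (iii) and (iv), and the reuse of a single edge in (vii) to produce the loop $Ryy$. This reliance on loops is consistent with Remark~\ref{rem:density}, since density as defined here allows the interpolant to coincide with an endpoint. Non-emptiness of $D$ is not used in any item.
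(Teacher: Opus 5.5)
Your proposal is correct and follows the paper's proof item by item, using the same instantiations: the diagonal instances in (iii) and (iv), and feeding one edge twice into Euclideanness in (vii). The only difference is cosmetic, in the convergence part of (vii): you take the common successor $w=z$ via $Ryz$ and $Rzz$, whereas the paper takes $w=y$ via $Rzy$ and $Ryy$, which is the mirror image of the same argument.
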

\begin{proof}
(i) If $R$ is reflexive, every $x$ has the successor $x$, so it is serial.  If
$xRy$, choose $z=x$; then $Rxz$ is $Rxx$ and $Rzy$ is $Rxy$, proving density.

(ii) Suppose $Rxy$ and $Rxz$.  Symmetry gives $yRx$ and $zRx$, so $w=x$ is a
common successor of $y$ and $z$.

(iii) If $Rxx$, asymmetry gives $\neg Rxx$, so $R$ is irreflexive.  If
$Rxy\wedge Ryx$, asymmetry contradicts one of the conjuncts; hence the
antecedent of antisymmetry can hold only when no such distinct pair exists.

(iv) Substituting $x=y=z$ into anti-transitivity yields
$Rxx\wedge Rxx\to\neg Rxx$, hence $\neg Rxx$.

(v) Put $y=x$ in strong connexity: $Rxx\vee Rxx$, hence $Rxx$.  Connexity is
immediate because $Rxy\vee Ryx$ is stronger than $Rxy\vee Ryx\vee x=y$.

(vi) Given $Rxy\wedge Rxz$, apply connexity to $y,z$; its conclusion is exactly
$Ryz\vee Rzy\vee y=z$.

(vii) Suppose first that $Rxy$.  Euclideanness applied twice to the same
target gives $Ryy$, so $z=y$ witnesses density.  If $Rxy\wedge Rxz$, right
Euclideanness gives $Ryz$, establishing weak connexity.  For convergence,
Euclideanness gives $Rzy$ from $Rxz\wedge Rxy$ and $Ryy$ from
$Rxy\wedge Rxy$; thus $w=y$ is a common successor of $y$ and $z$.
\end{proof}

\begin{corollary}\label{cor:p11closure}
Strong connexity entails \prop{1}, \prop{3}, \prop{10}, \prop{12}, \prop{14},
and \prop{15}.
\end{corollary}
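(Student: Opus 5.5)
The corollary follows by chaining the solo entailments of Proposition~\ref{prop:solo}. Semantic entailment is transitive: if $\Gamma\models Q$ and $Q\models Q'$, then every model of $\Gamma$ is a model of $Q$ and hence of $Q'$. So I only need to exhibit a path of solo entailments from \prop{11} to each listed property.

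The steps, in order, are as follows.
\begin{itemize}
\item Proposition~\ref{prop:solo}(v) gives \prop{11}$\catentails$\prop{1} and \prop{11}$\catentails$\prop{12} directly.
\item Composing with Proposition~\ref{prop:solo}(i) then yields \prop{3} and \prop{10} from \prop{1}.
\item Composing with Proposition~\ref{prop:solo}(vi) yields \prop{15} from \prop{12}.
\end{itemize}
This accounts for \prop{1}, \prop{3}, \prop{10}, \prop{12} and \prop{15}.

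The only consequent not reached by this chaining is \prop{14}. The available routes to convergence in Proposition~\ref{prop:solo} go through symmetry (ii) or Euclideanness (vii), and a strongly connex relation need not have either. So I would prove convergence directly. Given $Rxy$ and $Rxz$, reflexivity (already established) gives $Ryy$. Strong connexity applied to $y,z$ gives $Ryz$ or $Rzy$.
\begin{itemize}
\item If $Rzy$, then $w=y$ is a common successor, since $Ryy$ and $Rzy$.
\item If $Ryz$, then $w=z$ works, using $Ryz$ together with $Rzz$ from reflexivity.
\end{itemize}
The hypotheses $Rxy$ and $Rxz$ are not even needed, so this step is short. It is still the one point where the proof is more than a mechanical composition of earlier items.

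Finally, I would note that no cardinality assumption beyond non-emptiness of $D$ is used. The singleton case is covered, because reflexivity supplies the loop needed in every witness above.
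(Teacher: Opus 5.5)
Your proof is correct and follows the paper's argument: chain Proposition~\ref{prop:solo}(v), (i) and (vi) for \prop{1}, \prop{3}, \prop{10}, \prop{12} and \prop{15}. Then prove convergence directly by applying strong connexity to $y,z$ and using reflexivity at the chosen common successor $w=z$ or $w=y$.
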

\begin{proof}
Proposition~\ref{prop:solo}(v) gives reflexivity and connexity, and then
Proposition~\ref{prop:solo}(i) and (vi) give seriality, density, and weak
connexity.  For convergence, suppose $Rxy\wedge Rxz$.  Strong connexity gives
$Ryz\vee Rzy$.  In the first case choose $w=z$ and use reflexivity at $z$; in
the second choose $w=y$ and use reflexivity at $y$.
\end{proof}

\subsection{Two-premise and compound entailments}

\begin{proposition}[Strong connexity decomposes]\label{prop:p1p12p11}
\[
\prop{11}\quad\catequiv\quad \prop{1}+\prop{12}.
\]
\end{proposition}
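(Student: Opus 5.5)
The equivalence splits into its two directions, and the forward direction is already available. Proposition~\ref{prop:solo}(v) gives $\prop{11}\models\prop{1}$ and $\prop{11}\models\prop{12}$. Since $\catentails$ with a conjunctive right-hand side abbreviates both consequences (Remark~\ref{rem:entailment-notation}), this is precisely $\prop{11}\catentails\prop{1}+\prop{12}$. I would simply cite that item.

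For the converse I would argue pointwise. Assume $R$ is reflexive and connex, and let $x,y\in D$ be arbitrary. Split on whether $x=y$:
\begin{itemize}
\item If $x=y$, reflexivity gives $Rxx$, which is the disjunct $Rxy$ of strong connexity.
\item If $x\neq y$, connexity gives $Rxy\vee Ryx\vee x=y$. The last disjunct is excluded, so $Rxy\vee Ryx$ holds.
\end{itemize}
Since $x,y$ were arbitrary, $R$ satisfies \prop{11}.

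As a cross-check, the argument can be phrased in the relation-algebraic language of Section~\ref{sec:algebra}. Here \prop{1} says $\Id\subseteq R\subseteq R\cup R^\smile$, and \prop{12} says $\comp\Id\subseteq R\cup R^\smile$. Taking the union gives $\Univ=\Id\cup\comp\Id\subseteq R\cup R^\smile$, which is exactly \prop{11}.

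The only point needing care is the case split on equality, which relies on decidable equality of points. This is classical and unproblematic in the paper's setting. Non-emptiness of $D$ plays no role, so there is no real obstacle.
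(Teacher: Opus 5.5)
Your proof is correct and follows the paper's argument: the forward direction is cited from Proposition~\ref{prop:solo}(v), and the converse is the same pointwise case analysis, using reflexivity on the diagonal and connexity off it. The only cosmetic difference is that the paper splits on the three disjuncts $Rxy\vee Ryx\vee x=y$ supplied by connexity rather than on $x=y$ up front, which makes your remark about decidable equality unnecessary.
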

\begin{proof}
The forward direction is Proposition~\ref{prop:solo}(v).  Conversely, fix
$x,y$.  Connexity gives $Rxy\vee Ryx\vee x=y$.  In the last case reflexivity
gives $Rxx$, which is both $Rxy$ and $Ryx$ after substituting $y=x$.  Hence
$Rxy\vee Ryx$ in every case.
\end{proof}

\begin{proposition}[Reflexivity and negative transitivity]\label{prop:p1p9}
\prop{1}+\prop{9}$\catentails$\prop{11}.
\end{proposition}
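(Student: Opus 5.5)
Fix arbitrary $x,y\in D$; the goal is $Rxy\vee Ryx$. We argue by contradiction, using the instance of negative transitivity \prop{9} whose middle point is one of the endpoints.

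Suppose $\neg Rxy$ and $\neg Ryx$. Apply \prop{9} to the triple $(x,y,x)$. The hypotheses $\neg Rxy$ and $\neg Ryx$ are exactly its antecedents, so the conclusion is $\neg Rxx$. This contradicts reflexivity \prop{1} at $x$. Hence $Rxy\vee Ryx$ for all $x,y$, which is strong connexity \prop{11}.

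The only step requiring any thought is choosing the instantiation $z=x$ in \prop{9}, which makes the conclusion a diagonal pair. Reflexivity then refutes that conclusion. No case split on whether $x=y$ is needed: when $x=y$ the argument still goes through, and the contradiction with \prop{1} appears even sooner.

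Alternatively, one can proceed relation-algebraically. Reflexivity $\Id\subseteq R$ gives $\comp R\subseteq\comp\Id$. Negative transitivity then yields $\comp R;\comp R^{\smile}\subseteq\ldots$; however, the pointwise argument above is shorter and is the one I would present. As a consistency check against the catalogue, the conclusion also yields \prop{12} and the other consequences of Corollary~\ref{cor:p11closure}.
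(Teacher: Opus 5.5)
Your proof is correct and matches the paper's argument: assuming $\neg Rxy$ and $\neg Ryx$, negative transitivity instantiated at $(x,y,x)$ gives $\neg Rxx$, contradicting reflexivity. The half-finished relation-algebraic aside adds nothing and could simply be omitted; the relevant form of \prop{9} is $\comp R;\comp R\subseteq\comp R$, which together with $\comp R\subseteq\comp{\Id}$ encodes the same pointwise step.
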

\begin{proof}
Fix $x,y$.  If neither $Rxy$ nor $Ryx$ held, negative transitivity applied to
$(x,y,x)$ would give $\neg Rxx$, contradicting reflexivity.  Thus
$Rxy\vee Ryx$.
\end{proof}

\begin{proposition}[Irreflexivity clusters]\label{prop:irr-clusters}
\begin{enumerate}[label=(\roman*),noitemsep]
\item \prop{2}+\prop{5}$\catentails$\prop{6};
\item \prop{2}+\prop{7}$\catentails$\prop{6}.
\end{enumerate}
\end{proposition}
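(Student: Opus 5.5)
Both parts are pointwise arguments by contradiction, proving asymmetry in the form: for all $x,y$, $Rxy\to\neg Ryx$. In each case I fix $x,y$ with $Rxy$, suppose also $Ryx$, and derive a contradiction with irreflexivity. No case split on the domain size is needed, and neither is the empty-relation observation of Proposition~\ref{prop:empty}: asymmetry is a universal Horn condition, so it is vacuous whenever no such pair exists.

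For (i), assume $Rxy$ and $Ryx$. Antisymmetry (\prop{5}) then gives $x=y$. Substituting, $Rxy$ becomes $Rxx$, which contradicts irreflexivity (\prop{2}). Hence $\neg Ryx$. In relation-algebraic terms, this is the computation $R\cap R^\smile\subseteq \Id\cap R=\varnothing$, using \prop{5} for the first inclusion and \prop{2} for the equality.

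For (ii), assume $Rxy$ and $Ryx$. Transitivity (\prop{7}) applied to the triple $(x,y,x)$ yields $Rxx$, again contradicting \prop{2}. Algebraically, $R\cap R^\smile\subseteq (R;R)\cap\Id\subseteq R\cap\Id=\varnothing$. The first inclusion holds because any pair $(x,y)$ in $R\cap R^\smile$ has $(x,x)\in R;R$; stated precisely, a nonempty $R\cap R^\smile$ forces a loop in $R;R$. Since the pointwise version is cleaner, I will present that one.

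There is no genuine obstacle here. The only point requiring care is stating precisely which instance is used: the triple $(x,y,x)$ in (ii), and the identification $y=x$ in (i). One might also remark that (ii) together with Proposition~\ref{prop:solo}(iii) shows that strict partial orders are exactly the irreflexive transitive relations.
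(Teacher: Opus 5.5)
Your pointwise arguments are exactly the paper's proof. In (i), antisymmetry forces $x=y$ and hence the forbidden loop $Rxx$; in (ii), transitivity at $(x,y,x)$ gives $Rxx$ directly. The one blemish is your algebraic aside for (ii): the displayed inclusion $R\cap R^\smile\subseteq(R;R)\cap\Id$ is false as written, since pairs in $R\cap R^\smile$ need not be diagonal. You do restate it correctly and present the pointwise version instead, so the proof itself is unaffected.
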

\begin{proof}
(i) Suppose $Rxy$.  If also $Ryx$, antisymmetry gives $x=y$, contradicting
irreflexivity.  Hence $\neg Ryx$.

(ii) Suppose $Rxy$ and, towards a contradiction, $Ryx$.  Transitivity gives
$Rxx$, contradicting irreflexivity.
\end{proof}

\begin{corollary}
Every strict partial order (\prop{2}+\prop{7}) is asymmetric and
antisymmetric.
\end{corollary}

\begin{proposition}[Symmetry and Euclideanness]\label{prop:sym-eucl}
In the presence of symmetry, transitivity and Euclideanness are
equivalent:
\[
\prop{4}+\prop{7}\quad\catequiv\quad\prop{4}+\prop{13}.
\]
\end{proposition}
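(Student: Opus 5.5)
The equivalence is really two implications, each holding pointwise in the presence of symmetry.  We show that \prop{4}+\prop{7} entails \prop{13} and that \prop{4}+\prop{13} entails \prop{7}.  Together with the trivial retention of \prop{4} on each side, this gives the biconditional in the catalogue sense of Remark~\ref{rem:entailment-notation}.  The idea is that symmetry lets us reverse the first edge of either pattern, turning the ``fork'' $Rxy\wedge Rxz$ of Euclideanness into the ``chain'' $Ryx\wedge Rxz$ of transitivity, and vice versa.  In relation-algebraic terms, from the table in Section~\ref{sec:algebra}: under $R=R^\smile$ the conditions $R^\smile;R\subseteq R$ and $R;R\subseteq R$ coincide literally.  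I would mention this as the conceptual reason but give the pointwise argument.

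For the forward direction, assume \prop{4} and \prop{7}, and suppose $Rxy$ and $Rxz$.  Symmetry applied to $Rxy$ gives $Ryx$.  Transitivity applied to $(y,x,z)$, using $Ryx$ and $Rxz$, gives $Ryz$, which is exactly \prop{13}.

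For the converse, assume \prop{4} and \prop{13}, and suppose $Rxy$ and $Ryz$.  Symmetry turns $Rxy$ into $Ryx$.  Euclideanness at the source $y$, applied to $Ryx$ and $Ryz$, yields $Rxz$, which is \prop{7}.

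No step is a genuine obstacle.  The only point needing care is choosing the correct variable instance, namely which point plays the common source in \prop{13}.  Note also that this argument does not use Proposition~\ref{prop:solo}(vii).
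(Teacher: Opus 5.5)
Your proof is correct and matches the paper's argument step for step: in both directions you reverse $Rxy$ to $Ryx$ by symmetry, then apply transitivity to $Ryx\wedge Rxz$ or Euclideanness at base $y$ to $Ryx\wedge Ryz$, exactly as the paper does. Your relation-algebraic remark, that $R^\smile;R\subseteq R$ and $R;R\subseteq R$ coincide under $R=R^\smile$, also correctly gives the conceptual reason for the equivalence.
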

\begin{proof}
Assume symmetry and transitivity, and suppose $Rxy\wedge Rxz$.  Symmetry gives
$Ryx$, and transitivity applied to $Ryx\wedge Rxz$ gives $Ryz$.

Conversely, assume symmetry and Euclideanness, and suppose
$Rxy\wedge Ryz$.  Symmetry gives $Ryx$.  Applying Euclideanness at base
$y$ to $Ryx\wedge Ryz$ gives $Rxz$.
\end{proof}

\begin{proposition}[Reflexive Euclideanness]\label{prop:refl-eucl}
\prop{1}+\prop{13}$\catentails$\prop{4}+\prop{7}.
\end{proposition}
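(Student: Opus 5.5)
The plan is to prove symmetry first, directly from reflexivity and Euclideanness, and then obtain transitivity for free from Proposition~\ref{prop:sym-eucl}. This way we avoid a separate transitivity argument.

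\emph{Symmetry.} Suppose $Rxy$. Reflexivity gives $Rxx$. Apply Euclideanness at base $x$ to the pair $Rxy\wedge Rxx$, taking targets $y$ and $x$. This yields $Ryx$, so $R$ is symmetric.

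\emph{Transitivity.} We now have \prop{4}+\prop{13}. The backward direction of Proposition~\ref{prop:sym-eucl} then yields \prop{7}. For a self-contained argument, let $Rxy\wedge Ryz$. Symmetry gives $Ryx$, and Euclideanness at base $y$ applied to $Ryx\wedge Ryz$ gives $Rxz$.

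There is no real obstacle. The only point requiring care is the order of arguments in Euclideanness, $\forall x,y,z\,(Rxy\wedge Rxz\to Ryz)$: the reflexive loop must be placed in the \emph{second} conjunct, so that the conclusion runs from $y$ back to $x$. Relation-algebraically this is $R=R^\smile;\Id\subseteq R^\smile;R\subseteq R$, which gives $R^\smile\subseteq R$ by taking converses. This could be noted as a remark alongside the pointwise proof.
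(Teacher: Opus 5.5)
Your proof is correct and is essentially the paper's own argument: symmetry from $Rxy\wedge Rxx$ by Euclideanness at base $x$, then transitivity via Proposition~\ref{prop:sym-eucl}. There is one small slip in your optional relation-algebraic aside: $R^\smile;\Id$ equals $R^\smile$, not $R$, so the chain should read $R^\smile=R^\smile;\Id\subseteq R^\smile;R\subseteq R$, which gives $R^\smile\subseteq R$ directly with no converse step.
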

\begin{proof}
Suppose $Rxy$.  Reflexivity gives $Rxx$, and Euclideanness at base $x$
with targets $y,x$ gives $Ryx$.  Thus $R$ is symmetric.  Transitivity now
follows from Proposition~\ref{prop:sym-eucl}.
\end{proof}

\begin{theorem}[Equivalence relation / S5 cluster]\label{thm:s5}
The following are equivalent:
\begin{enumerate}[label=(\roman*),noitemsep]
\item \prop{1}+\prop{13};
\item \prop{1}+\prop{4}+\prop{7};
\item $R$ is an equivalence relation.
\end{enumerate}
\end{theorem}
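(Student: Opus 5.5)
The plan is to prove the cycle (i)$\Rightarrow$(ii)$\Rightarrow$(iii)$\Rightarrow$(i). Almost all of the work has already been done by Propositions~\ref{prop:refl-eucl} and~\ref{prop:sym-eucl}.

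For (i)$\Rightarrow$(ii), assume \prop{1}+\prop{13}. Proposition~\ref{prop:refl-eucl} gives \prop{4}+\prop{7} directly. Reflexivity is retained from the hypothesis, so all of \prop{1}, \prop{4}, \prop{7} hold.

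For (ii)$\Leftrightarrow$(iii), nothing needs to be proved. An equivalence relation is by definition a reflexive, symmetric and transitive relation, so (iii) is just a restatement of (ii). I will say this explicitly so that the cycle closes.

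For (iii)$\Rightarrow$(i), assume \prop{1}, \prop{4}, \prop{7}. The forward direction of Proposition~\ref{prop:sym-eucl} turns \prop{4}+\prop{7} into \prop{4}+\prop{13}, and in particular gives \prop{13}. Together with \prop{1} this is (i). I could also give the one-line pointwise check: from $Rxy\wedge Rxz$, symmetry gives $Ryx$, and transitivity then gives $Ryz$.

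I expect no real obstacle here. The only point needing care is to make sure each invocation uses the correct direction of the earlier equivalence. The nontrivial step, that reflexive Euclidean relations are symmetric, was already handled in Proposition~\ref{prop:refl-eucl}.
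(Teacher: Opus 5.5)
Your proposal is correct and matches the paper's proof essentially step for step. In the paper, (i)$\Rightarrow$(ii) is Proposition~\ref{prop:refl-eucl} and (ii)$\Rightarrow$(iii) holds by definition. For (iii)$\Rightarrow$(i), the paper gives the same pointwise argument you offer as your alternative: symmetry turns $Rxy$ into $Ryx$, and transitivity then gives $Ryz$. That is exactly the forward direction of Proposition~\ref{prop:sym-eucl}.
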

\begin{proof}
(i)$\Rightarrow$(ii) is Proposition~\ref{prop:refl-eucl}; (ii)$\Rightarrow$(iii) is the
definition of an equivalence relation.  For (iii)$\Rightarrow$(i), if $Rxy\wedge Rxz$,
symmetry gives $Ryx$ and transitivity gives $Ryz$.
\end{proof}

\begin{proposition}[Symmetry plus antisymmetry]\label{prop:sym-antisym}
\prop{4}+\prop{5}$\catentails$\prop{7}.  Consequently, by
Proposition~\ref{prop:sym-eucl}, it also entails \prop{13}.
\end{proposition}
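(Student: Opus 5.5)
The plan is to show first that symmetry and antisymmetry together force $R\subseteq\Id$, and then to check transitivity directly for a relation contained in the identity.

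For the first step, suppose $Rxy$. Symmetry gives $Ryx$, and antisymmetry applied to $Rxy\wedge Ryx$ gives $x=y$. So every edge of $R$ is a loop; in relation-algebraic terms, $R=R\cap R^\smile\subseteq\Id$, combining \prop{4} ($R=R^\smile$) with \prop{5}.

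For transitivity, suppose $Rxy\wedge Ryz$. By the first step $x=y$ and $y=z$, so $x=z$. The assumption $Rxy$ then reads $Rxz$ after substituting $y=x$ and $z=x$. Hence $Rxz$, and $R$ is transitive. Equivalently, any $R\subseteq\Id$ satisfies $R;R\subseteq R$.

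The consequence follows immediately. We now have \prop{4}$+$\prop{7}, and Proposition~\ref{prop:sym-eucl} gives \prop{4}$+$\prop{13}, in particular \prop{13}. There is no real obstacle here. The only point needing care is the substitution step: the conclusion $Rxz$ comes from the hypothesis $Rxy$ once the equalities are used, not from reflexivity. That matters because $R$ need not be reflexive; it may be any subset of the diagonal, including the empty relation.
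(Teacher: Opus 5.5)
Your proof is correct and is essentially the paper's argument: symmetry plus antisymmetry collapses every edge to a loop, the conclusion $Rxz$ is obtained by substitution rather than from reflexivity, and Proposition~\ref{prop:sym-eucl} then yields \prop{13}. The paper is slightly more economical, using only $x=y$ from the first edge so that the premise $Ryz$ is literally $Rxz$, without passing through the intermediate fact $R\subseteq\Id$.
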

\begin{proof}
Suppose $Rxy\wedge Ryz$.  Symmetry gives $Ryx$, so antisymmetry yields $x=y$.
The premise $Ryz$ is therefore exactly $Rxz$.
\end{proof}

\begin{proposition}[Connexity and negative transitivity]\label{prop:trans-connex}
\prop{7}+\prop{12}$\catentails$\prop{9}.
\end{proposition}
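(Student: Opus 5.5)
We argue pointwise, and by contraposition on the conclusion of negative transitivity. Fix $x,y,z$ with $\neg Rxy$ and $\neg Ryz$, and suppose towards a contradiction that $Rxz$. The idea is to use connexity to turn the two missing edges into reversed edges, and then use transitivity to recover one of the missing edges.

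First dispose of degenerate cases. If $x=y$, then $\neg Rxy$ and $Rxz$ do not contradict each other directly. But $\neg Ryz$ becomes $\neg Rxz$, which contradicts the assumption $Rxz$. Likewise, if $y=z$, then $\neg Rxy$ is $\neg Rxz$, again a contradiction. So we may assume $x\neq y$ and $y\neq z$.

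Now connexity applied to the distinct pair $x,y$, together with $\neg Rxy$, gives $Ryx$. Transitivity applied to $Ryx\wedge Rxz$ gives $Ryz$, contradicting $\neg Ryz$. Hence $\neg Rxz$, which is \prop{9}. (Alternatively, connexity on $y,z$ gives $Rzy$, and then $Rxz\wedge Rzy$ yields $Rxy$; either route suffices.)

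There is no real obstacle here. The only point needing care is that connexity is silent on equal points, so the cases $x=y$ and $y=z$ must be checked separately, and there the conclusion is immediate by substitution. In relation-algebraic terms, connexity says $\comp R\cap\comp\Id\subseteq R^\smile$, and transitivity then gives $\comp R;\comp R\subseteq\comp R$. The pointwise proof is, however, the one we would write.
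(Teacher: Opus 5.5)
Your proof is correct and is essentially the paper's: the same separate handling of $x=y$ and $y=z$, then connexity to reverse a missing edge and transitivity to recover one of the assumed non-edges. Your parenthetical alternative (connexity on $y,z$ giving $Rzy$, then $Rxz\wedge Rzy$ giving $Rxy$) is exactly the paper's route, and your main route is its mirror image, which needs only the case $x=y$ split off.
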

\begin{proof}
Assume $\neg Rxy$ and $\neg Ryz$.  Connexity applied to $x,y$ gives
$Ryx\vee x=y$, and applied to $y,z$ gives $Rzy\vee y=z$.
If $x=y$, then $\neg Rxz$ is the assumption $\neg Ryz$; if $y=z$, it is
$\neg Rxy$.  Otherwise $Ryx$ and $Rzy$.  Suppose for contradiction that
$Rxz$.  Transitivity of $Rxz\wedge Rzy$ gives $Rxy$, contradicting the first
assumption.  Hence $\neg Rxz$.
\end{proof}

\begin{proposition}[Symmetric connex relations]\label{prop:sym-connex}
\prop{4}+\prop{12}$\catentails$\prop{9}.  If moreover $|D|\ge2$, then
\prop{4}+\prop{12}$\catentails$\prop{3}.
\end{proposition}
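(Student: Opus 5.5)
The plan is to prove both claims pointwise, directly from the definitions of symmetry (\prop{4}), connexity (\prop{12}), negative transitivity (\prop{9}) and seriality (\prop{3}). The key observation is that under \prop{4}+\prop{12}, any two distinct points $x\neq y$ satisfy both $Rxy$ and $Ryx$. Connexity gives $Rxy\vee Ryx$, and symmetry turns either disjunct into both. So off the diagonal the relation is universal, and only the loops $Rxx$ can be missing.

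For \prop{9}, assume $\neg Rxy$ and $\neg Ryz$, and aim to show $\neg Rxz$.
\begin{itemize}
\item By the observation, $\neg Rxy$ forces $x=y$, and $\neg Ryz$ forces $y=z$.
\item Hence $x=y=z$, and $\neg Rxz$ is literally the assumption $\neg Rxy$.
\end{itemize}
This mirrors the case analysis in Proposition~\ref{prop:trans-connex}, but it is simpler: the case ``both $Ryx$ and $Rzy$'' is absorbed, because symmetry already converts those into $Rxy$ and $Ryz$, which contradict the assumptions.

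For \prop{3} with $|D|\ge2$, fix $x\in D$ and pick some $y\in D$ with $y\neq x$, which the cardinality hypothesis provides. By the observation, $Rxy$ holds, so $y$ is an $R$-successor of $x$. Hence $R$ is serial.

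The only point needing care, rather than a real obstacle, is explaining why $|D|\ge2$ is needed. On a singleton, the empty relation satisfies \prop{4} and \prop{12} (Proposition~\ref{prop:empty}) but fails \prop{3}. I would mention this briefly after the proof to justify the hypothesis; otherwise both parts are immediate.
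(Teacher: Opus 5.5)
Your proof is correct and matches the paper's argument essentially step for step. You use the same key observation that \prop{4}+\prop{12} relates every pair of distinct points in both directions, so $\neg Rxy\wedge\neg Ryz$ forces $x=y=z$ for \prop{9}, and any $y\ne x$ is an $R$-successor for \prop{3}. Your remark that the empty relation on a singleton (Proposition~\ref{prop:empty}) shows why $|D|\ge2$ is needed is a correct addition that the paper's proof leaves implicit.
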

\begin{proof}
Symmetry and connexity imply that every two distinct points are related in both
directions.  Thus, if $\neg Rxy$, necessarily $x=y$ and the diagonal edge is
absent.  If both $\neg Rxy$ and $\neg Ryz$, we have $x=y=z$, so the conclusion
$\neg Rxz$ is one of the assumptions.  This proves negative transitivity.

For seriality, fix $x$ and choose $y\ne x$.  Connexity gives $Rxy\vee Ryx$,
and symmetry turns either disjunct into $Rxy$.
\end{proof}

\begin{proposition}[Antisymmetry with negative transitivity]\label{prop:p5p9}
\prop{5}+\prop{9}$\catentails$\prop{7}.
\end{proposition}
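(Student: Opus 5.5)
Assume antisymmetry (\prop{5}) and negative transitivity (\prop{9}), and suppose $Rxy\wedge Ryz$. The aim is to show $Rxz$. The plan is to argue by contradiction: assume $\neg Rxz$ and use \prop{9}, in its contrapositive form $Rac\to Rab\vee Rbc$ for every $b$, to produce a symmetric pair that antisymmetry then collapses.

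\textbf{Key step.} Apply the contrapositive of negative transitivity to the edge $Rxy$, inserting the intermediate point $z$. This gives $Rxz\vee Rzy$. If $Rxz$ holds we are done. Otherwise we have $Rzy$. Together with the hypothesis $Ryz$, antisymmetry yields $y=z$. Then the hypothesis $Rxy$ is literally $Rxz$, which contradicts the assumption $\neg Rxz$, or simply gives the conclusion directly.

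\textbf{Expected obstacle.} There is essentially none; the only real choice is which edge to split and at which point. Splitting $Ryz$ at $x$ instead gives $Ryx\vee Rxz$. In the case $Ryx$, antisymmetry with $Rxy$ gives $x=y$, and then $Ryz$ becomes $Rxz$. So either split works, and I would present the first one. I would also note that no irreflexivity or cardinality assumption is needed, so the entailment holds on every non-empty domain, as the catalogue requires.
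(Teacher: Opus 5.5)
Your proof is correct and takes the same approach as the paper: use the contrapositive of \prop{9} to split one edge at the third point, then let antisymmetry collapse the resulting two-cycle. The paper splits $Ryz$ at $x$ (your stated alternative), obtaining $Ryx$ and hence $x=y$, whereas your main version splits $Rxy$ at $z$ and obtains $y=z$; the two arguments are mirror images.
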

\begin{proof}
Suppose $Rxy\wedge Ryz$ and assume $\neg Rxz$.  The contrapositive of negative
transitivity, applied to $(y,x,z)$, says
\[
Ryz\to Ryx\vee Rxz.
\]
Hence $Ryx$.  Together with $Rxy$, antisymmetry gives $x=y$, and then $Ryz$
is $Rxz$, a contradiction.
\end{proof}

\begin{corollary}[Strict weak orders]\label{cor:strictweak}
\prop{6}+\prop{9}$\catentails$\prop{7}.  Thus an asymmetric negatively transitive
relation is a strict weak order in the usual preference-theoretic sense; see,
for example, Roberts~\cite[Ch.~2]{Roberts1979}.
\end{corollary}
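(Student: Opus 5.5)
The corollary is a direct combination of two earlier results, so the plan is to chain them. By Proposition~\ref{prop:solo}(iii), asymmetry \prop{6} entails antisymmetry \prop{5}. Hence every model of \prop{6}+\prop{9} is also a model of \prop{5}+\prop{9}, which is the situation of Proposition~\ref{prop:monotonicity} combined with the solo entailment. Proposition~\ref{prop:p5p9} then gives transitivity \prop{7}. In catalogue notation:
\[
\prop{6}+\prop{9}\catentails\prop{5}+\prop{9}\catentails\prop{7}.
\]

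For the interpretive sentence, I would collect what has now been established for an asymmetric, negatively transitive relation. It is asymmetric by hypothesis and irreflexive by Proposition~\ref{prop:solo}(iii). It is transitive by the argument above, so it is a strict partial order. It is also negatively transitive, which makes incomparability transitive. This is exactly the standard definition of a strict weak order found in Roberts, namely an asymmetric and negatively transitive relation, which the cited literature shows to be a strict partial order.

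As a sanity check, I would also verify the result pointwise, independently of the earlier proposition. Suppose $Rxy$ and $Ryz$ but $\neg Rxz$. Applying the contrapositive of \prop{9} to the triple $(y,x,z)$ yields $Ryx\vee Rxz$, so $Ryx$ holds. This contradicts asymmetry applied to $Rxy$. This direct argument even avoids the equality step needed in Proposition~\ref{prop:p5p9}.

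There is no real obstacle here. The only point requiring care is to instantiate negative transitivity with the correct ordering of variables, using the triple $(y,x,z)$ rather than $(x,y,z)$. I would double-check that step.
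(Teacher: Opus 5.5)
Your proposal is correct and follows the paper's own proof exactly: \prop{6} entails \prop{5} by Proposition~\ref{prop:solo}(iii), and then Proposition~\ref{prop:p5p9} gives \prop{7}. Your supplementary pointwise argument, with \prop{9} contraposed at $(y,x,z)$, is also valid; it shows that under asymmetry the contradiction comes at once, without the equality step used in Proposition~\ref{prop:p5p9}.
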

\begin{proof}
Asymmetry entails antisymmetry by Proposition~\ref{prop:solo}(iii), so apply
Proposition~\ref{prop:p5p9}.
\end{proof}

\begin{proposition}[Antisymmetry with Euclideanness]\label{prop:p5p13}
\prop{5}+\prop{13}$\catentails$\prop{7}.
\end{proposition}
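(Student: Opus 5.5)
We prove \prop{5}+\prop{13}$\catentails$\prop{7} pointwise, mirroring the proof of Proposition~\ref{prop:p5p9}. The strategy is to show that, under Euclideanness plus antisymmetry, any two-step path $Rxy\wedge Ryz$ already forces $x=y$. The premise $Ryz$ then \emph{is} the desired conclusion $Rxz$.

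Fix $x,y,z$ with $Rxy$ and $Ryz$. First, apply Euclideanness at base $x$ with both targets equal to $y$. From $Rxy\wedge Rxy$ we obtain $Ryy$, exactly as in Proposition~\ref{prop:solo}(vii). This fact is not strictly needed for the argument below, but it records that every point with a predecessor carries a loop.

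The key step is to obtain the back-edge $Ryx$. Euclideanness only produces edges between two successors of a common base, so we need a base whose successors include both $x$ and $y$. The base $x$ itself would only work if $Rxx$ held, and we do not know that. So we look for another route.

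\begin{itemize}
\item Use instead the target pair $(y,x)$ viewed from base $x$. This requires $Rxx$, which we again lack.
\item Look at $z$'s role. From $Ryz$ and $Ryy$ (base $y$), Euclideanness gives $Rzy$ and $Ryz$. Antisymmetry applied to $Ryz\wedge Rzy$ then gives $y=z$.
\end{itemize}

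So the plan is as follows. From $Ryy$ (obtained above) and $Ryz$, apply Euclideanness at base $y$ with targets $z,y$ to get $Rzy$. Antisymmetry applied to $Ryz\wedge Rzy$ yields $z=y$. The premise $Rxy$ is then exactly $Rxz$.

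The only delicate point is the choice of base and targets: the direct attempt to collapse $x=y$ fails for lack of $Rxx$. Collapsing $y=z$ instead succeeds, because the loop $Ryy$ is supplied by the incoming edge $Rxy$ via Proposition~\ref{prop:solo}(vii).
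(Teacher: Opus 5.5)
Your final argument (obtain $Ryy$ from $Rxy\wedge Rxy$, then $Rzy$ by Euclideanness at base $y$ applied to $Ryz\wedge Ryy$, then $y=z$ by antisymmetry so that $Rxy$ is $Rxz$) is correct and is exactly the paper's proof. Tidy the write-up, though. The opening claim that a two-step path forces $x=y$ is false: $\{(a,b),(b,b)\}$ is antisymmetric and Euclidean with $a\neq b$. Also, the loop $Ryy$ is essential to the argument, not ``not strictly needed''.
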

\begin{proof}
Suppose $Rxy\wedge Ryz$.  From $Rxy\wedge Rxy$, Euclideanness gives
$Ryy$.  Applying Euclideanness at base $y$ to $Ryz\wedge Ryy$ gives
$Rzy$.  Antisymmetry applied to $Ryz\wedge Rzy$ yields $y=z$, so $Rxy$ is
$Rxz$.
\end{proof}

\begin{proposition}[Negative transitivity with Euclideanness]\label{prop:p9p13}
\prop{9}+\prop{13}$\catentails$\prop{7}.
\end{proposition}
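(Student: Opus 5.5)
We argue pointwise, as in Propositions~\ref{prop:p5p9} and~\ref{prop:p5p13}. Suppose $Rxy\wedge Ryz$ and, towards a contradiction, $\neg Rxz$. The plan is to use negative transitivity in its contrapositive form and Euclideanness to produce the missing edge.

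The first step applies the contrapositive of negative transitivity to the triple $(x,y,z)$ in the form
\[
Rxz\to Rxy\vee Ryz.
\]
This is useless directly, so we instead pick a triple whose conclusion is an edge we already know. Negative transitivity on $(x,z,y)$ gives $Rxy\to Rxz\vee Rzy$. Since $Rxy$ holds and $\neg Rxz$ is assumed, we obtain $Rzy$.

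The second step uses Euclideanness to build a contradiction. From $Ryz$ and $Ryz$, Euclideanness at base $y$ gives $Rzz$. From $Rxy\wedge Rxy$ we similarly get $Ryy$. The aim is then to transport an edge back to $x$. Applying negative transitivity to $(x,z,z)$ gives $Rxz\vee Rzz$ from $Rxz$, which is the wrong direction. Instead we use $(x,y,z)$ read contrapositively: $Rxz\to Rxy\vee Ryz$. This again points the wrong way, so the main obstacle is choosing the triple correctly. Negative transitivity in the contrapositive form $Rab\to Rac\vee Rcb$ holds for every $c$. Taking $a=x$, $b=y$, $c=z$ gives $Rzy$, as above. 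We also have $Ryz$.

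The third step closes the argument with Euclideanness at base $y$. From $Ryz$ and $Ryz$ we already have $Rzz$. Euclideanness at base $z$ applied to $Rzy\wedge Rzz$ gives $Ryz$, which is nothing new. To reach $x$ we need an edge out of $x$ toward $z$. Using $Rxy$ together with an edge $Rx?$ does not help directly, so we apply negative transitivity to $Rxy$ with $c=y$'s predecessor. A more efficient route is to apply Euclideanness at base $x$ to $Rxy\wedge Rxy$, which gives $Ryy$, and then negative transitivity with $a=y$, $b=y$, $c=x$, which gives $Ryx\vee Rxy$ and is harmless.

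The cleanest route is the following. Negative transitivity on $Ryz$ with $c=x$ gives $Ryx\vee Rxz$, hence $Ryx$. Euclideanness at base $y$ applied to $Ryx\wedge Ryz$ then gives $Rxz$, which contradicts the assumption $\neg Rxz$. This final combination is the whole proof. The expected obstacle, selecting which instance of negative transitivity to use, is resolved by taking the instance $(y,x,z)$ exactly as in Proposition~\ref{prop:p5p9}, with Euclideanness replacing antisymmetry in the last step.
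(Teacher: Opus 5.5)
Your final ``cleanest route'' is correct and is exactly the paper's proof. The contraposed instance of negative transitivity at $(y,x,z)$ gives $Ryx$ from $Ryz$ and $\neg Rxz$, and Euclideanness at base $y$ applied to $Ryx\wedge Ryz$ then yields the contradicting edge $Rxz$; the earlier exploratory steps (deriving $Rzy$, $Rzz$, $Ryy$) are valid but unnecessary and should be cut from a write-up.
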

\begin{proof}
Suppose $Rxy\wedge Ryz$ and assume for contradiction that $\neg Rxz$.
Negative transitivity, contraposed at $(y,x,z)$, gives
$Ryz\to Ryx\vee Rxz$; hence $Ryx$.  Euclideanness at base $y$ applied to
$Ryx\wedge Ryz$ now gives $Rxz$, contradiction.
\end{proof}

\begin{proposition}[Connex Euclidean relations]\label{prop:p12p13}
\prop{12}+\prop{13}$\catentails$\prop{7}+\prop{9}.
\end{proposition}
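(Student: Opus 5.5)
The plan is to prove transitivity first and then get negative transitivity for free from Proposition~\ref{prop:trans-connex}, which already gives \prop{7}+\prop{12}$\catentails$\prop{9}. So the whole task reduces to showing \prop{12}+\prop{13}$\catentails$\prop{7}.

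For transitivity, suppose $Rxy\wedge Ryz$; we must show $Rxz$.
\begin{enumerate}[label=(\alph*),noitemsep]
\item Euclideanness applied to $Rxy\wedge Rxy$ gives $Ryy$. Likewise $Ryz\wedge Ryz$ gives $Rzz$.
\item Euclideanness at base $y$ applied to $Ryz\wedge Ryy$ gives $Rzy$.
\item If $x=z$, we need $Rxx$, which follows from $Rxy\wedge Rxy$ by step (a)'s argument. So assume $x\neq z$. Connexity at $x,z$ then gives $Rxz\vee Rzx$.
\item If $Rxz$, we are done.
\item If $Rzx$, apply Euclideanness at base $z$ to $Rzx\wedge Rzy$, yielding $Rxy$. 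This gives nothing new, so this branch needs a different route.
\end{enumerate}

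The fix for case (e) is to exploit that $Rzx$ makes $x$ a successor of $z$. Euclideanness at base $z$ applied to $Rzy\wedge Rzx$ gives $Ryx$. Then Euclideanness at base $y$ applied to $Ryx\wedge Ryz$ gives $Rxz$, as required. In fact this shows that every successor of $y$ is Euclidean-linked to $x$ once $Ryx$ is available.

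Case (e) is the only real obstacle: the naive use of Euclideanness at base $z$ just returns the known fact $Rxy$. The resolution is to derive $Ryx$ first via base $z$, and only then apply Euclideanness at base $y$. With transitivity established, invoke Proposition~\ref{prop:trans-connex} to conclude \prop{9}.
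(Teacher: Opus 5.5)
Your argument is correct and is essentially the paper's proof: you reduce to transitivity via Proposition~\ref{prop:trans-connex}, split on connexity at $x,z$, and in the $Rzx$ case chain Euclideanness through $Rzy$ (from $Ryy\wedge Ryz$), then $Ryx$ (from $Rzy\wedge Rzx$), then $Rxz$ (from $Ryx\wedge Ryz$). One small slip: in case (c), $Rxy\wedge Rxy$ yields $Ryy$, not $Rxx$; the needed $Rxx$ is the $Rzz$ you already obtained in (a) from $Ryz\wedge Ryz$ with $z=x$, which is exactly how the paper handles that case.
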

\begin{proof}
It is enough to prove transitivity; negative transitivity then follows from
Proposition~\ref{prop:trans-connex}.  Suppose $Rxy\wedge Ryz$.  From $Rxy$,
Euclideanness gives $Ryy$.  Connexity applied to $x,z$ gives
$Rxz\vee Rzx\vee x=z$.

If $Rxz$, we are done.  If $x=z$, then $yRx$ and Euclideanness applied to
$yRx\wedge yRx$ gives $Rxx=Rxz$.  Finally suppose $Rzx$.  From
$Ryy\wedge Ryz$, Euclideanness gives $Rzy$.  From $Rzy\wedge Rzx$ it gives
$Ryx$, and from $Ryx\wedge Ryz$ it gives $Rxz$.
\end{proof}

\begin{proposition}[Reflexivity and weak connexity]\label{prop:p1p15}
\prop{1}+\prop{15}$\catentails$\prop{14}.
\end{proposition}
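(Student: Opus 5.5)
The plan is a direct pointwise proof.  We fix $x,y,z$ with $Rxy\wedge Rxz$ and must produce $w$ with $Ryw\wedge Rzw$.  Weak connexity applied to this pair yields three cases, $Ryz$, $Rzy$ or $y=z$, and in each case the witness $w$ will be one of $y$ and $z$, with reflexivity supplying the missing loop.  This mirrors the convergence argument in Corollary~\ref{cor:p11closure}; the only difference is that the comparability of $y$ and $z$ now comes from \prop{15} applied at the common predecessor $x$, rather than from strong connexity.

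The cases run as follows.
\begin{itemize}
\item If $Ryz$, take $w=z$.  Then $Ryw$ is the case hypothesis and $Rzw$ is $Rzz$, which holds by reflexivity.
\item If $Rzy$, take $w=y$ symmetrically, using $Ryy$ from reflexivity.
\item If $y=z$, take $w=y$.  Both required edges are then $Ryy$, again supplied by reflexivity.
\end{itemize}

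No step is expected to be a real obstacle.  The only point to watch is that weak connexity must be applied to the pair of successors $y,z$ of $x$.  Since the hypothesis $Rxy\wedge Rxz$ is exactly the antecedent of \prop{15}, this is automatic.  Note also that the argument never uses $x$ itself as the witness; the witness always comes from reflexivity at $y$ or $z$.

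In relation-algebraic terms, the argument says that \prop{15} gives $R^\smile;R\subseteq R\cup R^\smile\cup\Id$.  Each of these three pieces is contained in $R;R^\smile$ once $\Id\subseteq R$: indeed $R=R;\Id\subseteq R;R^\smile$, and similarly $R^\smile\subseteq R;R^\smile$ and $\Id\subseteq R;R^\smile$.  This yields \prop{14}, and I might add it as a one-line remark after the pointwise proof.
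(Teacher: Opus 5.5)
Your proof is correct and is essentially the paper's own argument. Weak connexity at the common predecessor $x$ splits into the cases $Ryz$, $Rzy$ and $y=z$, and in each case $w=z$ or $w=y$ works, with reflexivity supplying the missing loop; your optional relation-algebraic remark is also sound, since $\Id\subseteq R$ gives $\Id\subseteq R^\smile$ and hence each of $R$, $R^\smile$ and $\Id$ is contained in $R;R^\smile$.
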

\begin{proof}
Suppose $Rxy\wedge Rxz$.  Weak connexity gives $Ryz$, $Rzy$, or $y=z$.
If $Ryz$, choose $w=z$ and use reflexivity of $z$; if $Rzy$, choose $w=y$ and
use reflexivity of $y$; if $y=z$, choose $w=y$ and use reflexivity once.
\end{proof}

\begin{proposition}[A three-premise recovery of Euclideanness]\label{prop:p1p4p15}
\prop{1}+\prop{4}+\prop{15}$\catentails$\prop{13}.
\end{proposition}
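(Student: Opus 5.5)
The plan is to prove Euclideanness directly. Suppose $Rxy\wedge Rxz$; the goal is $Ryz$. Weak connexity applied at base $x$ gives $Ryz\vee Rzy\vee y=z$, and we treat the three cases separately.

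In the first case there is nothing to do. In the second case, $Rzy$ holds, and symmetry turns it into $Ryz$. In the third case, $y=z$, so the goal $Ryz$ becomes $Ryy$, and reflexivity supplies it.

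So the argument is a three-way case split in which symmetry handles the only nontrivial disjunct and reflexivity handles the diagonal disjunct. No earlier result is needed. As an alternative route, one could note via Proposition~\ref{prop:sym-eucl} that it suffices to prove transitivity in the presence of symmetry. Given $Rxy\wedge Ryz$, symmetry gives $Ryx\wedge Ryz$, and weak connexity at base $y$ yields $Rxz\vee Rzx\vee x=z$; each of these gives $Rxz$, using symmetry or reflexivity. The direct route is shorter, so I would use it.

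The only point requiring care is the diagonal case $y=z$. This is exactly where reflexivity is indispensable, since weak connexity supplies only the disjunct $y=z$ and not the loop $Ryy$. There is no real obstacle beyond this.
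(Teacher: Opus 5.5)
Your proof is correct and is essentially the paper's own argument: the paper also applies weak connexity to $Rxy\wedge Rxz$ and closes the three cases $Ryz$, $Rzy$, $y=z$ directly, using symmetry and reflexivity respectively. Your alternative route through transitivity and Proposition~\ref{prop:sym-eucl} is also valid but unnecessary, as you note.
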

\begin{proof}
Suppose $Rxy\wedge Rxz$.  Weak connexity gives $Ryz$, $Rzy$, or $y=z$.
The first case is the desired conclusion; the second gives $Ryz$ by symmetry;
in the third, $Ryz$ is $Ryy$, which follows from reflexivity.
\end{proof}

\begin{proposition}[Universality from symmetry and strong connexity]\label{prop:complete}
\prop{4}+\prop{11} forces the universal relation $R=D\times D$.
\end{proposition}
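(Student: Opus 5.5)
The plan is to show directly that every ordered pair $(x,y)\in D\times D$ lies in $R$. The inclusion $R\subseteq D\times D$ holds by the standing convention that $R$ is a relation on $D$, so only the reverse inclusion needs an argument.

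Fix arbitrary $x,y\in D$. Strong connexity (\prop{11}) gives $Rxy\vee Ryx$, and we split into these two cases. If $Rxy$, there is nothing to do. If $Ryx$, symmetry (\prop{4}) applied to the pair $(y,x)$ gives $Rxy$. So $Rxy$ holds in every case, and hence $D\times D\subseteq R$.

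The same argument can be phrased relation-algebraically using the table of Section~\ref{sec:algebra}. Strong connexity reads $\Univ=R\cup R^\smile$, and symmetry reads $R=R^\smile$. Substituting the second into the first gives $\Univ=R\cup R=R$, which is exactly the claim. I would mention this as a one-line alternative after the pointwise proof.

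There is no real obstacle. The only point to note is that the diagonal is covered without a separate case: taking $y=x$, strong connexity already yields $Rxx$, as in Proposition~\ref{prop:solo}(v). The argument also needs no cardinality assumption beyond non-emptiness of $D$; in particular it does not need the assumption $|D|\ge2$ required in Proposition~\ref{prop:sym-connex}.
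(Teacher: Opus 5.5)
Your proof is correct and is essentially the paper's own argument: fix $x,y$, use strong connexity to get $Rxy\vee Ryx$, and in the second case apply symmetry to obtain $Rxy$. Your relation-algebraic version, $\Univ=R\cup R^\smile=R\cup R=R$, restates the same reasoning in the notation of Section~\ref{sec:algebra} and is also valid.
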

\begin{proof}
Given $x,y$, strong connexity gives $Rxy\vee Ryx$.  In the second case symmetry
gives $Rxy$ as well.
\end{proof}

\begin{proposition}[Reflexivity on the field]\label{prop:fieldrefl}
If $R$ is symmetric and transitive, then it is reflexive on $\field(R)$.
\end{proposition}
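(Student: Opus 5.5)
The plan is a direct pointwise argument, with a case split according to which coordinate of a pair in $R$ the given point occupies. Fix $x\in\field(R)$. By definition of the field, there is some $y\in D$ with $Rxy$ or $Ryx$. The goal is to derive $Rxx$ in each case, using symmetry (\prop{4}) and transitivity (\prop{7}) exactly once each.

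First, suppose $Rxy$. Symmetry gives $Ryx$, and transitivity applied to the chain $Rxy\wedge Ryx$ with triple $(x,y,x)$ yields $Rxx$. Second, suppose $Ryx$. Symmetry gives $Rxy$, and the same transitivity step on $Rxy\wedge Ryx$ again yields $Rxx$. Alternatively, the second case reduces to the first once symmetry has produced $Rxy$.

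Equivalently, one could invoke Proposition~\ref{prop:sym-eucl} to obtain Euclideanness (\prop{13}). Applied to $Rxy\wedge Rxy$, it gives $Ryy$, so the point $y$ at the other end of the edge is also looped. I would mention this only as a remark, since the direct argument is shorter.

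There is no real obstacle here. The only point needing care is to stress that the conclusion concerns $\field(R)$ and not all of $D$. Points outside the field need not carry loops, as the empty relation on a non-empty domain shows: it is symmetric and transitive by Proposition~\ref{prop:empty}, yet fails \prop{1}. This remark also explains why \prop{4}+\prop{7} does not entail \prop{1} in the catalogue, even though the relation is reflexive on its field.
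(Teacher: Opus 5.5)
Your proof is correct and is essentially the paper's own argument: from $x\in\field(R)$ you take an edge $Rxy$ or $Ryx$, use symmetry to obtain both, and apply transitivity along $(x,y,x)$ to get $Rxx$. Your side remarks are also sound: the Euclidean route yields $Ryy$, and the empty relation shows that \prop{4}+\prop{7} does not entail \prop{1}.
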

\begin{proof}
Let $x\in\field(R)$.  Either $Rxy$ for some $y$ or $Ryx$ for some $y$.  By
symmetry, in either case both $Rxy$ and $Ryx$ hold.  Transitivity then gives
$Rxx$.
\end{proof}

\begin{proposition}[Further compound entailments from closure reconciliation]
\label{prop:closure-entailments}
For arbitrary $R$ on a non-empty domain:
\begin{enumerate}[label=(\roman*),noitemsep]
\item \prop{3}+\prop{4}+\prop{7}$\catentails$\prop{1};
\item \prop{3}+\prop{8}+\prop{9}$\catentails$\prop{4};
\item \prop{8}+\prop{9}+\prop{14}$\catentails$\prop{4};
\item \prop{3}+\prop{7}+\prop{9}$\catentails$\prop{14};
\item \prop{3}+\prop{7}+\prop{15}$\catentails$\prop{14};
\item \prop{3}+\prop{8}+\prop{15}$\catentails$\prop{14};
\item \prop{3}+\prop{4}+\prop{9}+\prop{15}$\catentails$\prop{12}.
\end{enumerate}
\end{proposition}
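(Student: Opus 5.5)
All seven items are direct pointwise arguments. Two devices recur throughout. The first is the contrapositive of negative transitivity,
\[
Rac\to Rab\vee Rbc\qquad\text{for all }a,b,c,
\]
which lets an existing edge be ``split'' at any chosen midpoint. The second is seriality, which supplies a successor whenever one is needed.

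Item (i) is the argument of Proposition~\ref{prop:fieldrefl}. Seriality places every $x$ in $\field(R)$: some $y$ has $Rxy$, symmetry gives $Ryx$, and transitivity gives $Rxx$.

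For (ii) and (iii) I assume $Rxy$ and aim for $Ryx$. First I produce some $u$ with $Ryu$. In (ii) this comes from seriality. In (iii) it comes from convergence applied to $Rxy\wedge Rxy$, whose common successor $w$ satisfies $Ryw$. Next I split the edge $Ryu$ at midpoint $x$, which gives $Ryx\vee Rxu$. If $Rxu$ held, then $Rxy\wedge Ryu\wedge Rxu$ would contradict anti-transitivity. Hence $Ryx$.

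For (iv) and (v) I assume $Rxy\wedge Rxz$ and look for a common successor.
\begin{itemize}
\item In (iv), seriality gives $Ryu$, and splitting at $z$ gives $Ryz\vee Rzu$. If $Rzu$, take $w=u$. If $Ryz$, choose $v$ with $Rzv$ by seriality; transitivity then gives $Ryv$, so $w=v$ works.
\item In (v), weak connexity gives $Ryz$, $Rzy$, or $y=z$. In the first two cases the same trick applies: take a successor of the target vertex and use transitivity. In the third case any successor of $y$ works.
\end{itemize}

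The step I expect to be the main obstacle is (vi), since neither transitivity nor negative transitivity is available there. The plan is to show that the hypotheses force every point to have at most one successor. Suppose $Rxy\wedge Rxz$ with $y\neq z$. Weak connexity gives, without loss of generality, $Ryz$. Then $Rxy\wedge Ryz$ together with $Rxz$ contradicts anti-transitivity. So $y=z$, and a seriality successor of $y$ is the required $w$.

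Finally, for (vii) let $x\neq y$ and suppose $\neg Rxy$; by symmetry also $\neg Ryx$. Seriality gives $Rxu$, and splitting at $y$ yields $Rxy\vee Ryu$, hence $Ryu$. Symmetry now gives both $Rux$ and $Ruy$. Weak connexity at base $u$ then gives $Rxy\vee Ryx\vee x=y$, and every disjunct is excluded, a contradiction. So $R$ is connex.
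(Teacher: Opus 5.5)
Your proof is correct and follows the paper's argument item by item: (i) via seriality, symmetry and transitivity; (ii), (iii) and (vii) via a seriality or convergence successor combined with the contrapositive of negative transitivity; (v) and (vi) via the same case split on weak connexity. The only differences are cosmetic, such as organising (iv) by splitting the edge $Ryu$ at $z$ before the case distinction and phrasing (vii) as a proof by contradiction.
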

\begin{proof}
(i) Fix $x$.  By seriality choose $y$ with $Rxy$.  Symmetry gives $Ryx$, and
transitivity yields $Rxx$.

(ii) Suppose $Rxy$.  By seriality choose $z$ with $Ryz$.  Anti-transitivity
gives $\neg Rxz$.  If $\neg Ryx$, then negative transitivity applied to
$(y,x,z)$ gives $\neg Ryz$, a contradiction.  Hence $Ryx$.

(iii) Suppose $Rxy$.  Applying convergence to the repeated pair
$Rxy\wedge Rxy$ gives some $w$ with $Ryw$.  Anti-transitivity applied to
$Rxy\wedge Ryw$ gives $\neg Rxw$.  If $\neg Ryx$, negative transitivity at
$(y,x,w)$ gives $\neg Ryw$, a contradiction.  Hence $Ryx$.

(iv) Suppose $Rxy\wedge Rxz$.  If $Ryz$, choose by seriality some $w$ with
$Rzw$; transitivity gives $Ryw$, so $w$ is a common successor.  If
$\neg Ryz$, choose by seriality some $w$ with $Ryw$.  Were $\neg Rzw$ also
true, negative transitivity at $(y,z,w)$ would contradict $Ryw$.  Thus $Rzw$,
and again $w$ is common.

(v) Suppose $Rxy\wedge Rxz$.  Weak connexity gives $Ryz$, $Rzy$, or $y=z$.
In the first case choose a successor $w$ of $z$ and use transitivity to obtain
$Ryw$; in the second choose a successor $w$ of $y$ and use transitivity to
obtain $Rzw$; in the equality case any successor of $y=z$ is common.

(vi) Suppose $Rxy\wedge Rxz$.  Weak connexity again gives $Ryz$, $Rzy$, or
$y=z$.  The first alternative contradicts anti-transitivity applied to
$Rxy\wedge Ryz$, and the second contradicts anti-transitivity applied to
$Rxz\wedge Rzy$.  Hence $y=z$.  Seriality supplies a common successor.

(vii) Fix $x,y$.  If $Rxy$, connexity is immediate.  Otherwise choose, by
seriality, $z$ with $Rxz$.  Negative transitivity shows that $Ryz$: if
$\neg Ryz$, then $\neg Rxy\wedge\neg Ryz$ would imply $\neg Rxz$.
By symmetry $zRx$ and $zRy$.  Weak connexity at base $z$ gives
$Rxy\vee Ryx\vee x=y$, which is connexity.
\end{proof}

\section{Incompatibility and forced degeneracy}\label{sec:incompat}

It is useful to distinguish three phenomena: assumptions that are genuinely
inconsistent on a non-empty domain, assumptions that force $R$ to be empty,
and assumptions that merely eliminate composable pairs.

\begin{proposition}[Unsatisfiable combinations]\label{prop:unsat}
On a non-empty domain the following combinations are unsatisfiable:
\[
\prop{1}+\prop{2},\qquad
\prop{1}+\prop{6},\qquad
\prop{1}+\prop{8},\qquad
\prop{2}+\prop{11}.
\]
The combinations \prop{6}+\prop{11} and \prop{8}+\prop{11} are therefore
unsatisfiable as inherited cases.
\end{proposition}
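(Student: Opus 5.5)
The plan is to exploit non-emptiness of $D$: fix a point $a\in D$ and derive a contradiction at $a$ alone, using only the diagonal instances of the properties involved. Every case reduces to the loop $Raa$ being both forced and forbidden.

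For \prop{1}+\prop{2}, reflexivity gives $Raa$ and irreflexivity gives $\neg Raa$. For \prop{1}+\prop{6}, we use Proposition~\ref{prop:solo}(iii): asymmetry entails irreflexivity, so this reduces to the previous case. Directly, $Raa$ together with asymmetry instantiated at $x=y=a$ gives $\neg Raa$. For \prop{1}+\prop{8}, Proposition~\ref{prop:solo}(iv) gives irreflexivity from anti-transitivity, so again we fall back to \prop{1}+\prop{2}. For \prop{2}+\prop{11}, Proposition~\ref{prop:solo}(v) gives reflexivity from strong connexity: put $x=y=a$ to get $Raa\vee Raa$. This contradicts irreflexivity at $a$.

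The inherited cases follow by monotonicity, in the spirit of Proposition~\ref{prop:monotonicity}. A model of \prop{6}+\prop{11} would satisfy \prop{2} by Proposition~\ref{prop:solo}(iii), and hence would be a model of \prop{2}+\prop{11}, which we have just shown is unsatisfiable. Similarly, a model of \prop{8}+\prop{11} would satisfy \prop{2} by Proposition~\ref{prop:solo}(iv), giving the same contradiction. One could equally route these through \prop{1}+\prop{6} and \prop{1}+\prop{8} using \prop{11}$\catentails$\prop{1}.

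There is no real obstacle. The only point needing care is that non-emptiness of $D$ is essential. On the empty domain every universal property holds vacuously and all of these combinations would be satisfiable. So the proof must explicitly invoke the existence of the point $a$, which is exactly the standing hypothesis.
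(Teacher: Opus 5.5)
Your proof is correct and follows the paper's own argument: \prop{1}+\prop{2} is immediate, \prop{1}+\prop{6} and \prop{1}+\prop{8} reduce to it via \prop{6}$\catentails$\prop{2} and \prop{8}$\catentails$\prop{2}, \prop{2}+\prop{11} uses \prop{11}$\catentails$\prop{1}, and the inherited cases reuse these same solo entailments. Your explicit remark that non-emptiness of $D$ is what makes the diagonal contradiction at a point $a$ go through is a useful addition that the paper leaves implicit in its standing hypothesis.
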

\begin{proof}
The first pair is immediate.  The second and third follow because asymmetry and
anti-transitivity each entail irreflexivity.  The fourth follows because strong
connexity entails reflexivity.  The inherited cases use the same solo
entailments.
\end{proof}

\begin{proposition}[Pairs forcing the empty relation]\label{prop:forceempty}
Each of the following combinations forces $R=\varnothing$:
\[
\prop{4}+\prop{6},\qquad
\prop{2}+\prop{13},\qquad
\prop{8}+\prop{10}.
\]
Consequently \prop{6}+\prop{13} and \prop{8}+\prop{13} also force
$R=\varnothing$.
\end{proposition}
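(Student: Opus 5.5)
The plan is to prove each of the three combinations separately by assuming an edge $xRy$ and deriving a contradiction. The two inherited cases then follow from solo entailments already proved.

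For \prop{4}+\prop{6}, suppose $Rxy$. Symmetry gives $Ryx$, while asymmetry applied to $Rxy$ gives $\neg Ryx$. This is a contradiction, so $R$ has no edges.

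For \prop{2}+\prop{13}, suppose $Rxy$. As in the proof of Proposition~\ref{prop:solo}(vii), Euclideanness applied to $Rxy\wedge Rxy$ gives $Ryy$, which contradicts irreflexivity.

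For \prop{8}+\prop{10}, suppose $Rxy$. Density supplies $z$ with $Rxz\wedge Rzy$. Anti-transitivity applied to the path $x,z,y$ then gives $\neg Rxy$, contradicting the assumption. This is the only step that uses the witness produced by an existential property, but it is still immediate.

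For the consequences, Proposition~\ref{prop:solo}(iii) shows that \prop{6} entails \prop{2}, and Proposition~\ref{prop:solo}(iv) shows that \prop{8} entails \prop{2}. Hence \prop{6}+\prop{13} and \prop{8}+\prop{13} each entail \prop{2}+\prop{13}, and by monotonicity (Proposition~\ref{prop:monotonicity}) they force $R=\varnothing$. Alternatively, \prop{13} entails \prop{10} by Proposition~\ref{prop:solo}(vii), which reduces \prop{8}+\prop{13} to the third case.

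Proposition~\ref{prop:empty} shows that the empty relation satisfies every one of these properties. So these combinations are satisfiable and degenerate, not unsatisfiable. I expect no step to be a real obstacle; the main care needed is in choosing the correct instantiation of anti-transitivity in the third case.
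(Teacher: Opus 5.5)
Your proposal is correct and follows the paper's proof essentially step for step: it derives a contradiction from an assumed edge $xRy$ in each of the three base cases, using the same instantiations of asymmetry, Euclideanness and anti-transitivity. It then obtains the inherited cases from \prop{6}$\catentails$\prop{2}, \prop{8}$\catentails$\prop{2} and \prop{13}$\catentails$\prop{10}, exactly the solo entailments the paper cites.
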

\begin{proof}
If $Rxy$ and $R$ is symmetric, then $Ryx$, contradicting asymmetry; hence
\prop{4}+\prop{6} forces emptiness.

If $Rxy$ and $R$ is Euclidean, applying \prop{13} to
$Rxy\wedge Rxy$ gives $Ryy$, contradicting irreflexivity.  Thus
\prop{2}+\prop{13} forces emptiness.

Finally, if $Rxy$ and $R$ is dense, choose $z$ with $Rxz\wedge Rzy$.
Anti-transitivity applied to this composable pair gives $\neg Rxy$, a
contradiction.  Hence \prop{8}+\prop{10} forces emptiness.  The two stated
corollaries use \prop{6}$\catentails$\prop{2},
\prop{8}$\catentails$\prop{2}, and \prop{13}$\catentails$\prop{10}.
\end{proof}

\begin{proposition}[Transitivity plus anti-transitivity]\label{prop:p7p8}
A relation is both transitive and anti-transitive if and only if it has no
composable pair:
\[
\prop{7}+\prop{8}\quad\Longleftrightarrow\quad R;R=\varnothing.
\]
In particular, non-empty relations can satisfy both properties.
\end{proposition}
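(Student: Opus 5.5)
The plan is to prove both directions of the biconditional pointwise and then exhibit a small witness for the final sentence. Throughout, $R;R=\varnothing$ means that there are no $x,y,z$ with $Rxy\wedge Ryz$. Here the middle point $y$ is shared, and $x$, $y$, $z$ need not be distinct.

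For the direction from right to left, assume there is no composable pair. Then the common antecedent $Rxy\wedge Ryz$ of \prop{7} and \prop{8} is never satisfied. Both universal implications therefore hold vacuously, exactly as the empty relation did in Proposition~\ref{prop:empty}. In relation-algebraic terms, $R;R=\varnothing$ is contained both in $R$ and in $\comp R$, which are precisely the forms of \prop{7} and \prop{8} in the table of Section~\ref{sec:algebra}.

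For the direction from left to right, assume \prop{7} and \prop{8}, and suppose towards a contradiction that $Rxy\wedge Ryz$ for some $x,y,z$. Transitivity gives $Rxz$, while anti-transitivity applied to the same pair gives $\neg Rxz$, which is a contradiction. Algebraically, $R;R\subseteq R\cap\comp R=\varnothing$. The degenerate cases $x=y$ or $y=z$ need no separate treatment, because the argument never uses distinctness.

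For the final sentence, take $D=\{a,b\}$ and $R=\{(a,b)\}$. The only edge ends at $b$, and $b$ has no successor, so no composable pair exists. By the equivalence just proved, this non-empty relation satisfies both \prop{7} and \prop{8}. There is no real obstacle here. The only point needing care is to state explicitly that composability is defined through a shared middle point with no distinctness requirement, so that loops such as $Rxx$ count. This is consistent with Proposition~\ref{prop:solo}(iv): a loop would itself be a composable pair, and $R$ is irreflexive.
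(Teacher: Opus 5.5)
Your proof is correct and follows the paper's argument essentially verbatim. A composable pair would yield both $Rxz$ and $\neg Rxz$, and the converse holds vacuously. The one-edge relation $\{(a,b)\}$ on $\{a,b\}$ is exactly the paper's witness, and your relation-algebraic remark $R;R\subseteq R\cap\comp R=\varnothing$ is an accurate addition.
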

\begin{proof}
If $Rxy\wedge Ryz$, transitivity gives $Rxz$ while anti-transitivity gives
$\neg Rxz$, impossible.  Hence no composable pair exists.  Conversely, if no
composable pair exists, the antecedents of both universal implications are
always false, so both properties hold vacuously.
\end{proof}

\begin{example}
On $D=\{a,b\}$, the one-edge relation $R=\{(a,b)\}$ is non-empty and satisfies
both transitivity and anti-transitivity.  This is the smallest counterexample
to the tempting but false claim that \prop{7}+\prop{8} is incompatible for
non-empty $R$.
\end{example}

\begin{corollary}[Seriality with no composable pairs]\label{cor:p3p7p8}
The following observation was pointed out by Ian Hodkinson.  On a non-empty domain,
\[
\prop{3}+\prop{7}+\prop{8}
\]
is unsatisfiable.
\end{corollary}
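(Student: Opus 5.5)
The plan is to reduce the claim to Proposition~\ref{prop:p7p8}: if $R$ is both transitive and anti-transitive, then $R;R=\varnothing$, so no composable pair $Rxy\wedge Ryz$ can exist. It then suffices to show that seriality on a non-empty domain always produces such a composable pair.

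The construction runs as follows. Since $D$ is non-empty, fix some $x\in D$. Seriality (\prop{3}) gives a $y$ with $Rxy$. Applying seriality again, this time at $y$, gives a $z$ with $Ryz$. The pair $Rxy\wedge Ryz$ is composable, so $(x,z)\in R;R$, which contradicts $R;R=\varnothing$.

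Alternatively, and without invoking Proposition~\ref{prop:p7p8}, one can argue pointwise from the same pair. Transitivity gives $Rxz$, while anti-transitivity gives $\neg Rxz$. One could also route through Proposition~\ref{prop:solo}(iv), which says \prop{8} entails irreflexivity. Combined with transitivity, irreflexivity yields asymmetry by Proposition~\ref{prop:irr-clusters}(ii). This route is unnecessary, though, since the direct contradiction already suffices.

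There is no real obstacle here. The only point requiring care is that non-emptiness of $D$ is genuinely needed to obtain the starting point $x$. On the empty domain the empty relation would satisfy all three properties vacuously, which is why the statement is restricted to non-empty domains. It is also worth remarking that the points need not be distinct: $x$, $y$ and $z$ may coincide, and the argument is unaffected.
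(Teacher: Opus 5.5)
Your proof is correct and is essentially the paper's argument: Proposition~\ref{prop:p7p8} rules out composable pairs under \prop{7}+\prop{8}, while two applications of seriality from some point of the non-empty domain produce one. Your pointwise variant, where transitivity gives $Rxz$ and anti-transitivity gives $\neg Rxz$, is simply the proof of that proposition written out inline.
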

\begin{proof}
By Proposition~\ref{prop:p7p8}, \prop{7}+\prop{8} leaves no composable
pair.  But seriality gives, for any $x$, some $y$ with $Rxy$ and then some
$z$ with $Ryz$, producing a composable pair.
\end{proof}

\begin{proposition}[Convergence with no composable pairs]
\label{prop:nocomp-conv}
If $R;R=\varnothing$ and $R$ is convergent, then $R=\varnothing$.
Consequently \prop{7}+\prop{8}+\prop{14} forces the empty relation.
\end{proposition}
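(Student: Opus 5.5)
The argument is a direct contradiction, in the same style as Proposition~\ref{prop:forceempty}, followed by an appeal to Proposition~\ref{prop:p7p8}. Suppose $R;R=\varnothing$, $R$ is convergent, and, towards a contradiction, $Rxy$ for some $x,y\in D$.

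The key step is to apply convergence to the repeated pair $Rxy\wedge Rxy$, exactly as in the proof of Proposition~\ref{prop:closure-entailments}(iii). This yields some $w$ with $Ryw$. Then $Rxy$ and $Ryw$ form a composable pair, so $(x,w)\in R;R$. That contradicts $R;R=\varnothing$, and therefore $R=\varnothing$.

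For the consequence, Proposition~\ref{prop:p7p8} shows that \prop{7}+\prop{8} is equivalent to $R;R=\varnothing$. Adding \prop{14}, the first part applies and forces $R=\varnothing$. Conversely, the empty relation satisfies \prop{7}, \prop{8} and \prop{14} by Proposition~\ref{prop:empty}. So the combination is consistent but degenerate, not unsatisfiable.

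I do not expect a real obstacle here. The only point to watch is that convergence must be instantiated with $y=z$: this is what extracts a successor of $y$ from a single edge, since convergence by itself does not assert seriality.
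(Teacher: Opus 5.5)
Your proof is correct and follows essentially the same route as the paper. You instantiate convergence at the repeated premise $Rxy\wedge Rxy$ to get a successor $w$ of $y$, which contradicts $R;R=\varnothing$, and you then obtain the consequence from Proposition~\ref{prop:p7p8}; your extra observation that the empty relation satisfies \prop{7}, \prop{8} and \prop{14} (so the combination is degenerate rather than unsatisfiable) is correct, though the statement does not require it.
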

\begin{proof}
Suppose $Rxy$.  Apply convergence to the repeated premise
$Rxy\wedge Rxy$.  There is then some $w$ with $Ryw$, so $Rxy\wedge Ryw$ is a
composable pair, contradicting $R;R=\varnothing$.  Hence no edge exists.  The
consequence follows from Proposition~\ref{prop:p7p8}.
\end{proof}

\begin{corollary}[Four characterisations of the empty relation]
\label{cor:empty-characterisations}
On a non-empty domain the following conditions are equivalent:
\[
R=\varnothing,\qquad
\prop{4}+\prop{6},\qquad
\prop{2}+\prop{13},\qquad
\prop{8}+\prop{10},\qquad
\prop{7}+\prop{8}+\prop{14}.
\]
Consequently, whenever any one of the four property combinations holds,
$R$ also has the full empty-relation profile
\[
\prop{2}+\prop{4}+\prop{5}+\prop{6}+\prop{7}+\prop{8}+\prop{9}+
\prop{10}+\prop{13}+\prop{14}+\prop{15}.
\]
\end{corollary}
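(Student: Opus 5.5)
The plan is to prove the equivalence in two directions, treating the four property combinations uniformly, and then to read off the profile from Proposition~\ref{prop:empty}.

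\emph{From the combinations to emptiness.} This is already done in the earlier results. Proposition~\ref{prop:forceempty} shows that each of \prop{4}+\prop{6}, \prop{2}+\prop{13} and \prop{8}+\prop{10} forces $R=\varnothing$. Proposition~\ref{prop:nocomp-conv} shows the same for \prop{7}+\prop{8}+\prop{14}: it first uses Proposition~\ref{prop:p7p8} to obtain $R;R=\varnothing$, and then convergence applied to a repeated edge to rule out any edge.

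\emph{From emptiness to the combinations.} Assume $R=\varnothing$ on a non-empty $D$. Proposition~\ref{prop:empty} says that $R$ satisfies \prop{2}, \prop{4}, \prop{5}, \prop{6}, \prop{7}, \prop{8}, \prop{9}, \prop{10}, \prop{13}, \prop{14} and \prop{15}. Each of the four combinations is drawn only from this list:
\begin{itemize}
\item \prop{4}, \prop{6};
\item \prop{2}, \prop{13};
\item \prop{8}, \prop{10};
\item \prop{7}, \prop{8}, \prop{14}.
\end{itemize}
So each combination holds for the empty relation. Hence all five conditions are equivalent, with $R=\varnothing$ as the hub.

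\emph{The profile.} If any one combination holds, then $R=\varnothing$ by the first direction. Proposition~\ref{prop:empty} then yields the displayed eleven-property profile directly.

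There is no real obstacle, since every ingredient is already proved. The only point needing care is that non-emptiness of $D$ is never used in the implications that were invoked. It matters only for the \emph{failures} listed in Proposition~\ref{prop:empty} (\prop{1}, \prop{3}, \prop{11}), which are not part of the claimed profile. The singleton exception for \prop{12} is likewise irrelevant, because \prop{12} is excluded from the profile.
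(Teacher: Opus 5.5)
Your proposal is correct and matches the paper's proof: emptiness follows from the combinations by Propositions~\ref{prop:forceempty} and~\ref{prop:nocomp-conv}. The converse direction and the eleven-property profile are both read off from Proposition~\ref{prop:empty}.
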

\begin{proof}
The four right-to-left implications are
Proposition~\ref{prop:forceempty} and
Proposition~\ref{prop:nocomp-conv}.  Conversely the empty relation has all
four displayed combinations, by Proposition~\ref{prop:empty}.  The profile
statement is again Proposition~\ref{prop:empty}.
\end{proof}

\subsection{Finite-domain strengthening}\label{subsec:finite}

The finite-model distinction in this subsection was suggested by Ian Hodkinson.
Finiteness adds genuine consequences that fail on arbitrary domains, and the
following three elementary cases explain the main strict-order effect.

\begin{proposition}[Finite strict-order degeneracy]\label{prop:finite-strict}
Let $D$ be finite and non-empty.
\begin{enumerate}[label=(\roman*),noitemsep]
\item \prop{2}+\prop{3}+\prop{7} is unsatisfiable;
\item \prop{2}+\prop{7}+\prop{10} holds if and only if $R=\varnothing$;
\item \prop{2}+\prop{7}+\prop{14} holds if and only if $R=\varnothing$.
\end{enumerate}
\end{proposition}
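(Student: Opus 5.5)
The common engine is that a finite strict partial order has no infinite ascending chain, and so every non-empty subset has a maximal element. By Proposition~\ref{prop:irr-clusters}(ii), \prop{2}+\prop{7} already gives asymmetry. Suppose $x_0Rx_1Rx_2R\cdots$ is any $R$-chain. Transitivity gives $x_iRx_j$ for all $i<j$, and irreflexivity then forces the $x_i$ to be pairwise distinct. On a finite $D$ every such chain therefore has length at most $|D|-1$. Equivalently, every point $x$ lies below some \emph{maximal} point $m$, meaning one with no $R$-successor: either $x$ itself has none, or we follow successors, which must terminate. I would state this as a single preliminary lemma and then derive the three clauses from it.

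For (i), seriality says that every point has a successor. So there is no maximal point, which contradicts the lemma applied to any $x\in D$; non-emptiness of $D$ is used to have such an $x$. For (ii), suppose $R\neq\varnothing$ and pick $xRy$ with $y$ chosen $R$-maximal among the successors of $x$. This is possible because the set $\{y : xRy\}$ is finite and non-empty, and a strict order on it has a maximal element. Density then gives $z$ with $xRz$ and $zRy$. Thus $z$ is a successor of $x$ lying strictly below $y$; since $y$ was maximal only in the "no successor above" sense, this is no contradiction, so I need to choose the other way. Instead I choose $y$ to be $R$-\emph{minimal} in $\{y : xRy\}$. Then $z$ is a successor of $x$ with $zRy$, which contradicts minimality. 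The converse direction, that the empty relation satisfies the three properties, is Proposition~\ref{prop:empty}.

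For (iii), suppose $xRy$. Extend $y$ along the chain lemma to a maximal point $m$ with $yRm$ or $y=m$. Transitivity gives $xRm$. Convergence applied to $Rxm\wedge Rxm$ yields some $w$ with $mRw$, which contradicts maximality of $m$. Hence $R=\varnothing$, and the converse is again Proposition~\ref{prop:empty}. This is the same trick as in Proposition~\ref{prop:nocomp-conv}: convergence on a repeated edge produces a successor of its target.

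The main obstacle I expect is stating the finite-chain lemma cleanly, in particular checking that distinctness of chain elements really follows from \prop{2}+\prop{7}. It does: $x_iRx_j$ with $i<j$ and $x_i=x_j$ would give a loop. Finiteness enters only there. Everything else is a one-line appeal to that lemma, together with the choice of a minimal interpolant for density.
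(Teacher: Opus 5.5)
Your proof is correct and follows essentially the same argument as the paper. Parts (i) and (iii) use the same maximal-element idea: in (iii) the paper takes the target $y$ maximal among points that have a predecessor, whereas you push $y$ up to a globally maximal point, which is equivalent. In (ii), your minimal successor of $x$ plays the role of the paper's adjacent edge in a longest chain; both give an edge with nothing strictly between its endpoints, which density forbids. When you write it up, drop the abandoned ``maximal successor'' attempt, and note that minimal elements exist by the dual of your chain lemma, since descending chains are bounded in the same way.
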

\begin{proof}
Under \prop{2}+\prop{7}, $R$ is a strict partial order.

(i) Every finite strict partial order has a maximal element, whereas seriality
requires every element to have a strict successor.

(ii) Suppose $R$ is non-empty.  Choose a chain of maximal finite length.
Density applied to any adjacent edge of that chain supplies a point strictly
between its endpoints, extending the chain, a contradiction.  Hence
$R=\varnothing$.  The converse follows from Proposition~\ref{prop:empty}.

(iii) Again suppose $R$ is non-empty and choose an edge $xRy$ with $y$
maximal among points lying above some predecessor.  Applying convergence to
the repeated premise $Rxy\wedge Rxy$ gives $yRw$ for some $w$, contradicting
maximality.  The converse is again Proposition~\ref{prop:empty}.
\end{proof}

\begin{remark}
These implications are genuinely finite-only.  Infinite strict orders provide
counterexamples: the Lean-certified witness library contains serial strict
orders, dense strict orders such as $(\mathbb{Q},<)$, and non-empty
irreflexive transitive convergent relations.  We do not claim here to give a
complete catalogue of all additional finite-domain consequences; a dedicated
finite-only reconciliation is a natural small extension of the verification
analysis.
\end{remark}

\section{Independence and non-entailment}\label{sec:independence}

We write $P\parallel Q$ when neither $P\models Q$ nor $Q\models P$.  In that
case two countermodels are required.  For compound antecedents, where a
symmetric formulation is usually unnatural, we use the weaker phrase
\emph{non-entailment} and give an explicit countermodel.

\begin{remark}[Logical reading of independence]\label{rem:logical-independence}
If $M_P$ denotes the class of models of a property theory $P$, then
$P\models Q$ means $M_P\subseteq M_Q$.  Non-entailment is therefore directed:
$P\not\models Q$ says that $M_P\setminus M_Q$ is non-empty.  Our notation
$P\parallel Q$ is deliberately symmetric and abbreviates the conjunction of
the two directed non-entailments.  This is the classical
\emph{independence-of-axioms} viewpoint familiar from elementary logical
treatments such as Suppes~\cite{Suppes1957} and Stoll~\cite{Stoll1963}, not
independence of primitive symbols in the sense of definability methods such as
Padoa's.

Compatibility is a separate issue: $P$ and $Q$ are jointly satisfiable exactly
when $M_P\cap M_Q\neq\varnothing$.  Thus incompatibility is not a limiting
case of independence but an orthogonal semantic distinction.  One could
introduce quantitative degrees of dependence on a fixed finite domain by
counting models, but over arbitrary domains there is no canonical measure on
these model classes, so no such degree is assumed here.
\end{remark}

\begin{proposition}[Selected independence results]\label{prop:independence}
The following pairs are independent:
\begin{enumerate}[label=(\roman*),noitemsep]
\item \prop{1}$\parallel$\prop{12};
\item \prop{3}$\parallel$\prop{7};
\item \prop{6}$\parallel$\prop{7};
\item \prop{7}$\parallel$\prop{9};
\item \prop{7}$\parallel$\prop{10};
\item \prop{7}$\parallel$\prop{13};
\item \prop{9}$\parallel$\prop{12};
\item \prop{14}$\parallel$\prop{15}.
\end{enumerate}
\end{proposition}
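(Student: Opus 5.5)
Each pair $P\parallel Q$ needs two directed countermodels: a relation satisfying $P$ and failing $Q$, and one satisfying $Q$ and failing $P$. I would give all sixteen as explicit finite relations on domains with at most four points. For each one, check the universal conditions by exhausting the few relevant tuples and refute the other property with a single instance. Throughout I would reuse Proposition~\ref{prop:empty} for the empty relation and Proposition~\ref{prop:solo}(i) to get density for free from reflexivity. The candidates are as follows.

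\begin{enumerate}[label=(\roman*),noitemsep]
\item \prop{1}$\parallel$\prop{12}. The identity on $\{a,b\}$ is reflexive but not connex. The relation $\{(a,b)\}$ on $\{a,b\}$ is connex but not reflexive.
\item \prop{3}$\parallel$\prop{7}. The two-cycle $\{(a,b),(b,a)\}$ is serial but not transitive, since $Raa$ is missing. The empty relation is transitive but not serial.
\item \prop{6}$\parallel$\prop{7}. The path $\{(a,b),(b,c)\}$ is asymmetric but not transitive. The loop $\{(a,a)\}$ is transitive but not asymmetric.
\item \prop{7}$\parallel$\prop{9}. The identity on $\{a,b\}$ is transitive, and it violates \prop{9} through $\neg Rab\wedge\neg Rba$ together with $Raa$. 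The two-cycle $\{(a,b),(b,a)\}$ is not transitive, but its complement $\{(a,a),(b,b)\}$ is transitive, so the two-cycle is negatively transitive.
\item \prop{7}$\parallel$\prop{10}. The edge $\{(a,b)\}$ is transitive but not dense. The reflexive closure of $\{(a,b),(b,c)\}$ on $\{a,b,c\}$ is dense by Proposition~\ref{prop:solo}(i), and it is not transitive because $Rac$ is missing.
\item \prop{7}$\parallel$\prop{13}, as discussed below.
\item \prop{9}$\parallel$\prop{12}. The empty relation on two points is negatively transitive but not connex. The three-cycle $a\to b\to c\to a$ is connex, and it violates \prop{9} via $\neg Rba$, $\neg Rac$, $Rbc$.
\item \prop{14}$\parallel$\prop{15}, as discussed below.
\end{enumerate}

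Most of these are immediate; the two less obvious cases are (vi) and (viii). For (vi), the edge $\{(a,b)\}$ is transitive but not Euclidean, since $Rab\wedge Rab$ would require $Rbb$. For the reverse direction I would take $R=\{(a,b),(b,b),(b,c),(c,b),(c,c)\}$, that is, a universal cluster on $\{b,c\}$ with $a$ pointing into it. It is Euclidean: the successor set of $b$ and of $c$ is $\{b,c\}$, which is totally related, and $a$'s only successor $b$ carries a loop. It is not transitive, because $Rab\wedge Rbc$ holds but $Rac$ does not.

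For (viii), take the diamond $a\to b$, $a\to c$, $b\to d$, $c\to d$, $d\to d$. It is convergent, with $d$ serving as a common successor everywhere. It fails weak connexity, because $b$ and $c$ are distinct, both successors of $a$, and unrelated. Conversely, $\{(a,b)\}$ satisfies weak connexity trivially, since the only instance has $y=z$. It is not convergent, because the repeated premise $Rab\wedge Rab$ demands a successor of $b$.

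The main obstacle is keeping the verification honest rather than finding the models. For \prop{9} and \prop{13}, each check must range over all triples, including degenerate ones with repeated points, and the loops and the cluster in (vi) are exactly what make the Euclidean check pass. I would therefore present each witness together with the one or two nontrivial tuples that could fail and argue why they do not, then note that all of these are finite rows in the Lean-certified witness library.
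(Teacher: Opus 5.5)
Your proposal is correct and takes the same route as the paper: two explicit finite countermodels for each pair, each checked directly against the definitions. Several of your witnesses differ from the paper's, and each checks out: the empty relation in (ii), the path and a single loop in (iii), the identity in (iv), the reflexive closure of a path in (v), and the diamond with a loop at $d$ in (viii), where the paper uses the symmetric star $\{(a,b),(a,c),(b,a),(c,a)\}$. Your Euclidean non-transitive relation in (vi) is the paper's up to relabelling, and your closing claim that these particular relations are rows of the Lean witness library is an unsupported aside that the argument does not need.
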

\begin{proof}
We give two witnesses for each pair.

(i) On $\{a,b\}$ the identity relation is reflexive but not connex.  The
one-edge relation $\{(a,b)\}$ is connex but not reflexive.

(ii) The two-cycle $\{(a,b),(b,a)\}$ is serial but not transitive.  The relation
$\{(a,a)\}$ on $\{a,b\}$ is transitive but not serial.

(iii) The directed three-cycle
$\{(a,b),(b,c),(c,a)\}$ is asymmetric but not transitive.  The identity
relation on two points is transitive but not asymmetric.

(iv) The two-cycle is negatively transitive but not transitive.  The relation
$\{(a,a)\}$ on $\{a,b\}$ is transitive but not negatively transitive: take
$x=a,y=b,z=a$.

(v) Let $D=\{a,b\}$ and
$R=\{(a,a),(a,b),(b,a)\}$.  Every edge factors through $a$, so $R$ is dense,
but $bRa\wedge aRb$ while $bRb$ fails, so $R$ is not transitive.  Conversely,
$\{(a,b)\}$ is transitive but not dense.

(vi) The one-edge relation $\{(a,b)\}$ is transitive but not Euclidean,
because $Rab\wedge Rab$ would require $Rbb$.  For the other direction, on
$D=\{a,b,c\}$ let
\[
R=\{(a,a),(a,b),(b,a),(b,b),(c,a)\}.
\]
The successor sets are $\{a,b\}$, $\{a,b\}$, and $\{a\}$ respectively, and
each is Euclidean; nevertheless $cRa\wedge aRb$ while $cRb$ fails, so
transitivity fails.

(vii) The empty relation on a two-element domain is negatively transitive but
not connex.  The directed three-cycle is connex but not negatively transitive:
$\neg aRc$ and $\neg cRb$ while $aRb$.

(viii) Let
$R=\{(a,b),(a,c),(b,a),(c,a)\}$ on three points.  It is symmetric, hence
convergent by Proposition~\ref{prop:solo}(ii), but it is not weakly connex:
$a$ has successors $b,c$, which are distinct and incomparable.  Conversely,
$\{(a,b)\}$ is weakly connex vacuously but not convergent, since the repeated
premise $Rab\wedge Rab$ would require $b$ to have a successor common with
itself.
\end{proof}

\begin{proposition}[Selected compound non-entailments]\label{prop:compound-nonent}
The following implications fail:
\begin{enumerate}[label=(\roman*),noitemsep]
\item \prop{3}+\prop{13} $\catnotentails$ \prop{7};
\item \prop{7}+\prop{14} $\catnotentails$ \prop{13};
\item \prop{7}+\prop{15} $\catnotentails$ \prop{13};
\item \prop{4}+\prop{15} $\catnotentails$ \prop{13};
\item \prop{1}+\prop{7}+\prop{15} $\catnotentails$ \prop{13};
\item \prop{2}+\prop{7}+\prop{10} $\catnotentails$ \prop{9}.
\end{enumerate}
\end{proposition}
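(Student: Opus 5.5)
Each non-entailment needs one countermodel satisfying the antecedent and failing the consequent. I will give each one explicitly and check the antecedent clauses pointwise.

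(i) I take $D=\{a,b\}$ and $R=\{(a,b),(b,b)\}$. Seriality holds because both points have successor $b$. For Euclideanness, the successor sets are $\{b\}$ and $\{b\}$, and $bRb$ holds. Transitivity also holds, so this does not refute \prop{7}. A candidate must be serial and Euclidean but not transitive, for example $R=\{(a,b),(b,b),(b,c),(c,b),(c,c)\}$ on $\{a,b,c\}$. It is serial, and Euclidean because the successor set of $a$ is $\{b\}$ with $bRb$, and $\{b,c\}$ is a clique. Then $aRb\wedge bRc$ while $aRc$ fails, so transitivity fails.

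(ii) I use a reflexive non-symmetric relation. Take $R=\{(a,a),(a,b),(b,b)\}$. It is transitive, and convergent by Proposition~\ref{prop:p1p15}, since two points are trivially weakly connex here. It is not Euclidean, because $aRb\wedge aRa$ would force $bRa$.

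(iii) The same relation works: it is transitive and weakly connex. Alternatively $\{(a,b)\}$ works, being transitive and vacuously weakly connex while needing $Rbb$.

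(iv) I use the two-cycle $\{(a,b),(b,a)\}$. It is symmetric. It is weakly connex, since the only pairs of successors are equal. It is not Euclidean, because $aRb\wedge aRb$ would give $bRb$.

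(v) The reflexive relation $\{(a,a),(a,b),(b,b)\}$ from (ii) already satisfies \prop{1}+\prop{7}+\prop{15} and fails \prop{13}.

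(vi) Finite domains are excluded by Proposition~\ref{prop:finite-strict}(ii), so the witness must be infinite. I take two disjoint copies of $\mathbb{Q}$ with $<$ inside each copy and no edges between copies. This is irreflexive, transitive and dense. Negative transitivity fails: pick $x$ in copy one and $y$ in copy two, and let $z$ be a larger element of copy one. Then $\neg Rxy$ and $\neg Ryz$, yet $Rxz$.

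The main obstacle is (vi). It needs an infinite witness, and its Lean certification is less routine than a finite check. The remaining cases are mechanical verifications of a few clauses each.
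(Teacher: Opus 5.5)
Your proposal is correct and uses the paper's method of explicit countermodels; the paper's own witnesses appear in (ii)--(v). In (i) your relation $\{(a,b),(b,b),(b,c),(c,b),(c,c)\}$ replaces the paper's $\{(a,a),(a,b),(b,a),(b,b),(c,a)\}$, and in (vi) your two disjoint copies of $(\mathbb{Q},<)$ replace the componentwise order on $\mathbb{Q}^2$; both substitutes check out (the paper lists two disjoint rational chains among its Lean-certified witnesses), and you should delete the abandoned first attempt in (i) from a final write-up.
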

\begin{proof}
(i) Use the three-point Euclidean non-transitive relation from
Proposition~\ref{prop:independence}(vi); it is serial.

(ii) On $\{a,b\}$ let $R=\{(a,a),(a,b),(b,b)\}$.  It is transitive and, being
reflexive, convergent by Proposition~\ref{prop:p1p15} together with its weak
connexity; directly, every pair of successors has a common successor.  It is
not Euclidean because $Raa\wedge Rab$ but $Rba$ fails.

(iii) The one-edge relation $\{(a,b)\}$ is transitive and weakly connex but not
Euclidean.

(iv) The two-cycle $\{(a,b),(b,a)\}$ is symmetric and weakly connex but not
Euclidean, again because a repeated target would require a loop.

(v) The relation $\{(a,a),(a,b),(b,b)\}$ used in (ii) is reflexive, transitive,
and weakly connex but not Euclidean.

(vi) On $\mathbb{Q}^2$ use the componentwise strict order:
$(a,b)R(c,d)$ iff $a<c$ and $b<d$.  It is irreflexive and transitive.  It is
dense, since coordinatewise midpoints interpolate every edge.  Negative
transitivity fails for
$x=(0,0)$, $y=(2,0)$, $z=(1,1)$: $\neg xRy$ and $\neg yRz$, but $xRz$.
\end{proof}

\begin{remark}[Why finite search is not a proof]
The last countermodel is necessarily infinite.  Indeed, on a finite non-empty
domain Proposition~\ref{prop:finite-strict}(ii) shows that
\prop{2}+\prop{7}+\prop{10} forces $R=\varnothing$, and
Proposition~\ref{prop:empty} then gives \prop{9}.  Hence no finite relation can
witness
\[
\prop{2}+\prop{7}+\prop{10}\catnotentails\prop{9}.
\]
This gives a concrete example of why finite search, although valuable for
countermodel discovery, cannot certify arbitrary-domain non-entailment.
\end{remark}

\section{Named structures and the corrected cluster diagram}\label{sec:clusters}

The preceding results recover several familiar classes.
\begin{description}[style=nextline]
\item[Preorder (quasi-order)]
Defined by \prop{1}+\prop{7}.  It also satisfies \prop{3} and \prop{10}.
The synonymous terms \emph{quasi-order} and \emph{quasi-ordering} have a long
logical and preference-theoretic history: Suppes~\cite{Suppes1957} uses
\emph{quasi-ordering}, while Fishburn~\cite{Fishburn1970} records both
\emph{quasi-order} and \emph{preorder}.

\item[Partial order]
Defined by \prop{1}+\prop{5}+\prop{7}.  It is a preorder with
antisymmetry.

\item[Total preorder]
Defined by \prop{1}+\prop{7}+\prop{12}.  Proposition~\ref{prop:p1p12p11}
gives strong connexity, and Proposition~\ref{prop:trans-connex} gives negative
transitivity.

\item[Total order]
Defined by \prop{1}+\prop{5}+\prop{7}+\prop{12}.  It is both a partial
order and a total preorder.

\item[Strict partial order]
Defined by \prop{2}+\prop{7}.  It is asymmetric and antisymmetric by
Proposition~\ref{prop:irr-clusters}.

\item[Strict total order]
Defined by \prop{2}+\prop{7}+\prop{12}.  It additionally satisfies
negative transitivity by Proposition~\ref{prop:trans-connex}.

\item[Strict weak order]
Defined by asymmetry and negative transitivity, \prop{6}+\prop{9};
transitivity follows from Corollary~\ref{cor:strictweak}.  This is the standard
preference-theoretic class in which incomparability is transitive; see also
Fishburn~\cite{Fishburn1970}.

\item[Equivalence relation]
Defined by \prop{1}+\prop{4}+\prop{7}, equivalently by
\prop{1}+\prop{13} (Theorem~\ref{thm:s5}).  Every equivalence relation also
satisfies \prop{3}, \prop{10}, \prop{13}, \prop{14}, and \prop{15}, but it
need not satisfy connexity or negative transitivity: the identity relation on
two points refutes both.

\item[Universal relation]
The relation $D\times D$.  Proposition~\ref{prop:complete} shows that symmetry
plus strong connexity already forces this class.
\end{description}

Figure~\ref{fig:clusters} records only genuine inclusions between named classes.
An arrow from $A$ to $B$ means that every $A$-relation is a $B$-relation, so $A$
is the more specialised class.

\begin{figure}[htbp]
\centering
\resizebox{0.98\textwidth}{!}{%
\begin{tikzpicture}[>=latex,semithick,
  nd/.style={draw,rounded corners=4pt,text width=2.45cm,minimum height=0.95cm,
             align=center,font=\scriptsize,inner sep=4pt}]

% strict branch
\node[nd] (spo) at (1.9,0) {Strict partial order\\[-1pt]\{\prop{2},\prop{7}\}};
\node[nd] (swo) at (0.0,2.3) {Strict weak order\\[-1pt]\{\prop{6},\prop{9}\}};
\node[nd] (sto) at (3.5,2.3) {Strict total order\\[-1pt]\{\prop{2},\prop{7},\prop{12}\}};
\draw[->] (swo) -- (spo);
\draw[->] (sto) -- (spo);
\draw[->] (sto) -- (swo);

% reflexive branch
\node[nd] (pre) at (10.0,0) {Preorder\\[-1pt]\{\prop{1},\prop{7}\}};
\node[nd] (po) at (6.7,2.3) {Partial order\\[-1pt]\{\prop{1},\prop{5},\prop{7}\}};
\node[nd] (tpre) at (9.8,2.3) {Total preorder\\[-1pt]\{\prop{1},\prop{7},\prop{12}\}};
\node[nd] (eq) at (13.1,2.3) {Equivalence relation\\[-1pt]\{\prop{1},\prop{4},\prop{7}\}\\=\{\prop{1},\prop{13}\}};
\node[nd] (to) at (8.25,4.7) {Total order\\[-1pt]\{\prop{1},\prop{5},\prop{7},\prop{12}\}};
\node[nd] (comp) at (13.1,4.7) {Universal relation\\[-1pt]$D\times D$};
\draw[->] (po) -- (pre);
\draw[->] (tpre) -- (pre);
\draw[->] (eq) -- (pre);
\draw[->] (to) -- (po);
\draw[->] (to) -- (tpre);
\draw[->] (comp) -- (eq);
\end{tikzpicture}%
}
\caption{Selected named relation classes.  Every arrow is an inclusion of
classes in the direction shown; no arrow represents a construction such as
reflexive closure.}
\label{fig:clusters}
\end{figure}
\FloatBarrier

\begin{remark}[Strict versus reflexive order terminology]
Figure~\ref{fig:clusters} deliberately contains no inclusion arrow between
strict partial orders and partial orders.  On a non-empty domain no relation can
belong to both classes: a strict partial order is irreflexive, whereas a partial
order is reflexive.  The linguistic closeness of the names reflects a standard
change of representation rather than inclusion of classes.  If $<$ is a strict
partial order, its reflexive closure $\leq\;=\;<\,\cup\Id$ is a partial order;
conversely, the strict part of a partial order $\leq$, defined by
$x<y$ iff $x\leq y$ and $x\neq y$, is a strict partial order.  The figure records
entailment between properties of a fixed relation, so these transformations are
not represented by arrows.
\end{remark}

\section{Modal frame correspondences}\label{sec:modal}

Several properties in Definition~\ref{def:properties} are standard first-order
conditions on Kripke frames.  In the usual normal modal setting, reflexivity,
seriality, symmetry, transitivity, and Euclideanness correspond to the
familiar axioms $T$, $D$, $B$, $4$, and $5$, respectively; see Lemmon and
Scott~\cite{LemmonScott1977}, Chellas~\cite{Chellas1980}, and Blackburn,
de Rijke and Venema~\cite{BlackburnEtAl2001}.  Convergence \prop{14} is the
frame condition for the familiar $\mathsf{.2}$-type principle
$\Diamond\Box A\to\Box\Diamond A$ in the reflexive-transitive setting.

The density condition \prop{10} deserves separate emphasis because its logical
position is easy to misread.  Its modal correspondent is the converse-looking
principle $\Box\Box A\to\Box A$.  Ghilardi and Mints~\cite{GhilardiMints2014}
study the logic $\mathsf{K4De}$ of transitive dense frames, obtained by
replacing the reflexivity component of the usual $\mathsf{S4}$ presentation by
a density axiom.  This is entirely consistent with Proposition~\ref{prop:solo}:
reflexivity entails relational density, so density is weaker than reflexivity,
not an additional strengthening of it.  In particular, adding \prop{10} to an
already reflexive frame imposes no new relational constraint.

Theorem~\ref{thm:s5} gives the standard S5 cluster in a compact form:
reflexivity plus Euclideanness already yields symmetry and transitivity.
Likewise, the Scott--Lemmon convergence schema explains why conditions such as
Euclideanness and convergence sit naturally together, although the
entailment relations between our fifteen predicates must still be proved at the
relational level rather than read off from the schema.

A terminological distinction is also useful for S4.2.  Dummett and Lemmon
\cite{DummettLemmon1959} use a stronger global directedness condition in their
completeness argument, whereas \prop{14} only requires a common successor for
points that already share a predecessor.  The identity relation on any set of
size at least two is reflexive, transitive, and convergent, but two distinct
points have no common successor; thus local convergence and global directedness
should not be conflated.

\section{Verification workflow and formalisation}\label{sec:verification}

The global completeness question for a catalogue is different from the proof of
any individual implication.  With fifteen predicates there are
$2^{15}=32{,}768$ possible antecedent sets, and the assertion that the
catalogue has no omitted positive consequences is itself a finite but
non-trivial coverage claim.  The verification architecture therefore separates discovery
from certification.

\subsection{Exhaustive small-model semantics}

An accompanying Python program exhaustively enumerates all binary relations on
domains of sizes $1$ through $4$ and evaluates \prop{1}--\prop{15}.  The numbers
of relations are respectively
\[
2,\qquad 16,\qquad 512,\qquad 65{,}536,
\]
for a total of $66{,}066$ relations.  These collapse to $188$ distinct
fifteen-bit property signatures.  The enumeration is used to discover
candidate entailments, small countermodels, and finite-survival cases; it is
not part of the proof of arbitrary-domain completeness.  This distinction
matters because antecedents such as \prop{2}+\prop{3}+\prop{7} have no finite
models although they are realised by infinite strict orders.

\subsection{Horn-normalised reconciliation}

The proved entailments and degeneracy rules form a Horn system.  Each
antecedent set is replaced by its Horn closure, so logically equivalent
obligations are not counted repeatedly under different literal presentations.
An antecedent with no small finite model requires either an explicit infinite
model or a proof of inconsistency.  This Horn-normalised reconciliation
organises the proof obligations; the global Lean certificate below verifies
the resulting coverage independently of the discovery procedure.

\subsection{Countermodel discovery and Lean certification}

For larger finite countermodels Z3 serves as a discovery tool.  A solver model
is not accepted on its own: its full \prop{1}--\prop{15} profile is
re-evaluated by an independent Python evaluator and encoded as an explicit
finite relation in Lean.  Each advertised property and non-property is checked
there by kernel reduction using \texttt{decide +kernel}, while infinite
witnesses are defined directly in Lean and proved from the ordered structures
involved.  The supplemental witness library includes, among others, the usual
strict orders on $\mathbb{N}$ and $\mathbb{Q}$, the componentwise strict order
on $\mathbb{Q}^2$, two disjoint rational chains, a dense fork, a strict weak
order with non-singleton indifference classes, and several solver-discovered
relations on five to nine points.  Twenty-one additional relations on at most
four points were selected from the exhaustive finite enumeration and given
complete fifteen-property profiles directly in Lean.  Thus the finite enumeration serves only as a discovery source; every model
used by the coverage certificate has an independent Lean-checked semantic
profile.

\subsection{Kernel-checked global coverage}\label{subsec:kernel-coverage}

The Lean certificate has three linked parts.  First,
\texttt{GlobalEntailmentCertificate.lean} implements an expanded 51-rule Horn
system on nineteen bits: the fifteen properties together with auxiliary bits
for the empty relation, absence of composable pairs, inconsistency, and the
universal relation.  Its witness manifest contains 49 certified rows.  Lean
kernel reduction checks both that no witness row conflicts with the Horn basis
and that, for every one of the $32{,}768$ antecedent masks, either inconsistency
is reached or a certified model exists and every property outside Horn closure
has a certified refuter.  Equivalently, all $32{,}768\times 15$ potential
positive-entailment obligations are covered.

Second, \texttt{WitnessCoupling.lean} packages every numerical witness row with
its actual carrier, relation, and proof of the stated positive and negative
properties.  The proof-carrying list projects by definitional equality to the
exact 49-row numerical manifest used by the coverage calculation.  Third,
\texttt{RuleCoupling.lean} proves semantic soundness of every one of the 51
numerical Horn rules on arbitrary non-empty domains; again, the proof-carrying
list projects by definitional equality to the exact rule list used by the
closure computation.  Hence neither the rule table nor the countermodel table
enters the global calculation through an unchecked transcription.

The consistency side is independently compressed in
\texttt{ConsistencyCertificate.lean}: a kernel computation over all
$32{,}768$ positive masks shows that each mask either contains one of the
seventeen inclusion-minimal inconsistent sets or is contained in one of the
six inclusion-maximal consistent sets.  The six maximal sets have exact
Lean-certified models, and all seventeen minimal inconsistent sets have direct
Lean contradiction proofs.

The formalisation is modular rather than packaged as a single monolithic Lean
theorem with semantic entailment notation.  Taken together, however, the
kernel-checked rule soundness, exact witness semantics, definitional coupling
to the numerical manifests, and exhaustive coverage calculation establish the
following mathematical statement.

\begin{theorem}[Completeness of the fifteen-property catalogue]
\label{thm:catalogue-complete}
Let $\Gamma\subseteq\{\prop{1},\ldots,\prop{15}\}$ and let \prop{$i$} be one
of the fifteen properties, with relations interpreted on non-empty domains.
Then:
\begin{enumerate}[label=(\roman*),noitemsep]
\item if the Horn closure of $\Gamma$ reaches $\bot$, then $\Gamma$ is
inconsistent;
\item otherwise, $\Gamma\models\prop{$i$}$ if and only if \prop{$i$} belongs
to its Horn closure;
\item $\Gamma$ is consistent if and only if it is contained in at least one of
M1--M6 in Section~\ref{sec:global-summaries}, equivalently if and only if it
contains none of the seventeen minimal inconsistent sets listed there.
\end{enumerate}
\end{theorem}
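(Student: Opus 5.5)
The proof combines a soundness half, supplied by the proved Horn rules, with a completeness half, supplied by the certified witnesses. The finite check over all $32{,}768$ antecedent masks then glues the two together.

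\emph{Soundness of the closure.} Each of the 51 Horn rules is a semantically valid implication on non-empty domains. This is proved pointwise by the results of Section~\ref{sec:entailments} and Section~\ref{sec:incompat} (Propositions~\ref{prop:solo}--\ref{prop:closure-entailments}, \ref{prop:unsat}--\ref{prop:nocomp-conv}) and formally by \texttt{RuleCoupling.lean}. The auxiliary bits are handled by their defining equivalences:
\begin{itemize}
\item the empty relation, where Proposition~\ref{prop:empty} gives its full profile;
\item absence of composable pairs, via Proposition~\ref{prop:p7p8};
\item the universal relation, via Proposition~\ref{prop:complete};
\item $\bot$, via the unsatisfiable sets.
\end{itemize}
A straightforward induction on the closure computation then shows that every bit in the Horn closure of $\Gamma$ holds in every model of $\Gamma$. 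This gives (i) directly, since a model would satisfy $\bot$, which is impossible. It also gives the ``if'' direction of (ii).

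\emph{Completeness of the closure.} Fix $\Gamma$ whose closure $C$ avoids $\bot$. The kernel-checked coverage computation guarantees two things. First, some certified witness row $W$ has every property in $C$, so $\Gamma$ is consistent. Second, for each $\prop{$i$}\notin C$ some certified row satisfies all of $C\supseteq\Gamma$ and refutes $\prop{$i$}$. By \texttt{WitnessCoupling.lean}, each such row is an actual relation on a non-empty carrier with the advertised positive and negative properties; finite rows are checked by decision and infinite ones, such as the $\mathbb{Q}^2$ order of Proposition~\ref{prop:compound-nonent}(vi), by direct proof. Such a row is therefore a countermodel to $\Gamma\models\prop{$i$}$, which gives the ``only if'' direction of (ii). The key point to verify is that the coupling is by definitional equality, so that the rows quantified over in the finite computation are exactly the semantic objects.

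\emph{Part (iii).} If $\Gamma$ is contained in one of M1--M6, the Lean-certified model of that maximal set witnesses consistency. If instead $\Gamma$ contains one of the seventeen minimal inconsistent sets, the direct contradiction proof for that set shows $\Gamma$ is inconsistent. The kernel computation in \texttt{ConsistencyCertificate.lean} shows that every mask falls into at least one of these two cases. The two cases are mutually exclusive by soundness, since a set cannot be both satisfiable and contain an unsatisfiable subset. Hence ``contained in some M$j$'', ``consistent'' and ``contains no minimal inconsistent set'' are all equivalent.

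\emph{Main obstacle.} The main obstacle is not any single argument but the trust boundary between the modular Lean components and the mathematical statement. One must check that the numerical encoding of masks and rules faithfully represents the fifteen predicates of Definition~\ref{def:properties}, including the convention of non-empty domains. One must also check that no infinite witness is needed whose properties were only asserted numerically. I would address this by stating explicitly the three coupling lemmas (rule soundness, witness semantics, mask coverage) and deriving (i)--(iii) from them as above.
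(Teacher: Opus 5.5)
Your proposal is correct and follows the paper's proof. Iterating the 51 semantically verified Horn rules along the closure gives (i) and the ``if'' direction of (ii); the definitionally coupled witnesses from the 32{,}768-mask coverage certificate give the ``only if'' direction; and the six-MCS/seventeen-MIS cover, with its model and contradiction certificates, gives (iii). One small correction: the mutual exclusivity in (iii) rests on those model and contradiction certificates, not on Horn-rule soundness as such, and your stated reason (a satisfiable set cannot contain an unsatisfiable subset) already reflects this.
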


\begin{proof}
Soundness of membership in Horn closure follows by iteration of the 51
semantically proved Horn rules.  If \prop{$i$} is absent from a consistent
closure, the exhaustive coverage certificate supplies one of the semantically
coupled witnesses satisfying the closure and falsifying \prop{$i$}; hence the
entailment fails.  The inconsistency and consistency claims follow from the
independently kernel-checked six-MCS/seventeen-MIS cover together with the six
model certificates and seventeen contradiction proofs.
\end{proof}

This theorem is deliberately scoped.  It asserts completeness for positive
entailment and compatibility among the fifteen selected properties under the
standing non-empty-domain semantics.  It is not a claim of completeness for
arbitrary first-order consequences about binary relations, nor a claim that all
additional consequences peculiar to finite domains have been catalogued.

\subsection{Reproducibility materials}

The verification package accompanying the preprint and journal submission
contains the Lean project with its pinned toolchain, the Python enumeration and
reconciliation programs, the Z3 search program and solver-discovered witness
files, and a short \texttt{VERIFICATION.md} giving the exact commands for the
formal layers.  The release is preserved in the public repository
\url{https://github.com/d3dd/binary-relations-catalogue} and in the immutable
Zenodo archive
\href{https://doi.org/10.5281/zenodo.22693298}{10.5281/zenodo.22693298}.
Raw build and verification transcripts are
convenience records rather than proofs: the reproducibility claim rests on
rerunnable source, explicit toolchain versions, and Lean kernel checking of the
theorem, rule, witness, consistency, and global coverage certificates.  The
Python and Z3 components document how witnesses and candidate laws were found;
they are not trusted for Theorem~\ref{thm:catalogue-complete}.

\section{Compact global consistency summaries}\label{sec:global-summaries}

Following a suggestion by Ian Hodkinson, we record two compact global
summaries.  He independently calculated from an earlier draft the same six
maximal consistent sets and seventeen minimal inconsistent sets.  They are also certified by the Lean formalisation described in
Section~\ref{sec:verification}.  The closure calculation gives a slightly
broader count that is small enough to state directly.  There are $358$ distinct
subsets of \{\prop{1},\ldots,\prop{15}\} closed under the Horn consequence
operator.  One is the all-properties closure reached from an inconsistent
antecedent; the remaining $357$ are consistent closed positive sets.

\subsection{Maximal consistent property sets}

Exactly six subsets are inclusion-maximal among the jointly satisfiable
positive property sets.  Each row below is also the exact
\prop{1}--\prop{15} profile of the witness shown.

\begin{center}
\small
\begin{tabular}{@{}cl@{}}
\toprule
 & Maximal consistent set and a simple witness \\
\midrule
M1 &
$\{\prop{1},\prop{3},\prop{4},\prop{5},\prop{7},\prop{9},\prop{10},
\prop{11},\prop{12},\prop{13},\prop{14},\prop{15}\}$:
one reflexive point \\[2pt]
M2 &
$\{\prop{2},\prop{3},\prop{4},\prop{8},\prop{9},\prop{12},\prop{14},
\prop{15}\}$:
two irreflexive points joined symmetrically \\[2pt]
M3 &
$\{\prop{2},\prop{3},\prop{4},\prop{9},\prop{10},\prop{12},\prop{14},
\prop{15}\}$:
the complete irreflexive relation on three points \\[2pt]
M4 &
$\{\prop{2},\prop{3},\prop{5},\prop{6},\prop{7},\prop{9},\prop{10},
\prop{12},\prop{14},\prop{15}\}$:
$(\mathbb{Q},<)$ \\[2pt]
M5 &
$\{\prop{2},\prop{3},\prop{5},\prop{6},\prop{8},\prop{12},\prop{14},
\prop{15}\}$:
a directed three-cycle \\[2pt]
M6 &
$\{\prop{2},\prop{4},\prop{5},\prop{6},\prop{7},\prop{8},\prop{9},
\prop{10},\prop{12},\prop{13},\prop{14},\prop{15}\}$:
the empty relation on one point \\
\bottomrule
\end{tabular}
\end{center}

Thus a positive property set is consistent exactly when it is contained in at
least one of M1--M6.  Notice that M4 necessarily requires an infinite domain
by Proposition~\ref{prop:finite-strict}(i).

\subsection{Minimal inconsistent property sets}

Dually, there are exactly seventeen inclusion-minimal inconsistent subsets:
\[
\begin{array}{llll}
\{\prop{1},\prop{2}\} &
\{\prop{1},\prop{6}\} &
\{\prop{1},\prop{8}\} &
\{\prop{2},\prop{3},\prop{4},\prop{5}\} \\[2pt]
\{\prop{2},\prop{3},\prop{4},\prop{7}\} &
\{\prop{2},\prop{3},\prop{13}\} &
\{\prop{2},\prop{11}\} &
\{\prop{3},\prop{4},\prop{5},\prop{8}\} \\[2pt]
\{\prop{3},\prop{4},\prop{6}\} &
\{\prop{3},\prop{5},\prop{8},\prop{9}\} &
\{\prop{3},\prop{6},\prop{8},\prop{9}\} &
\{\prop{3},\prop{6},\prop{13}\} \\[2pt]
\{\prop{3},\prop{7},\prop{8}\} &
\{\prop{3},\prop{8},\prop{10}\} &
\{\prop{3},\prop{8},\prop{13}\} &
\{\prop{6},\prop{11}\} \\[2pt]
\{\prop{8},\prop{11}\}. &&&
\end{array}
\]
Every inconsistent positive property set contains at least one of these.
These two summaries are consequences of the global closure analysis and have
an independent Lean consistency certificate; they are not additional
assumptions in the proof basis.

\begin{corollary}[Axiom bases for the full property theory]\label{cor:axiom-bases}
Let $E=\{\prop{1},\ldots,\prop{15}\}$, viewed as a theory whose fifteen
properties are axioms.  Call $B\subseteq E$ an \emph{irredundant axiom basis}
for $E$ when $B\models E$ and deleting any one axiom from $B$ destroys that
entailment.  Then the irredundant axiom bases for $E$ are exactly the seventeen
minimal inconsistent sets listed above.  Six have minimum cardinality two:
\[
\{\prop{1},\prop{2}\},\quad
\{\prop{1},\prop{6}\},\quad
\{\prop{1},\prop{8}\},\quad
\{\prop{2},\prop{11}\},\quad
\{\prop{6},\prop{11}\},\quad
\{\prop{8},\prop{11}\}.
\]
\end{corollary}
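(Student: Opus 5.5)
The plan is to reduce the statement to the inconsistency characterisation of Theorem~\ref{thm:catalogue-complete}(iii). The key observation is that, since $E$ contains both \prop{1} and \prop{2}, no relation on a non-empty domain satisfies $E$: reflexivity and irreflexivity clash at any point, as in Proposition~\ref{prop:unsat}. Consequently $B\models E$ holds exactly when $B$ has no model, that is, when $B$ is inconsistent in the sense fixed before Proposition~\ref{prop:monotonicity}. Indeed, if $B$ has a model, that model is not a model of $E$; and if $B$ has no model, the entailment holds vacuously.

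Using this equivalence, the condition ``$B\models E$ and $B\setminus\{\prop{j}\}\not\models E$ for every $\prop{j}\in B$'' becomes ``$B$ is inconsistent and each $B\setminus\{\prop{j}\}$ is consistent''.

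\begin{itemize}
\item Monotonicity (Proposition~\ref{prop:monotonicity}), in its contrapositive form that subsets of consistent sets are consistent, shows that this is equivalent to inclusion-minimality of $B$ among inconsistent sets.
\item Hence the irredundant axiom bases are exactly the inclusion-minimal inconsistent subsets of $E$.
\item By Theorem~\ref{thm:catalogue-complete}(iii) and the certified list, these are precisely the seventeen sets displayed.
\end{itemize}

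The only care needed is that deleting a single axiom suffices to test minimality. If some proper subset $B'\subsetneq B$ were inconsistent, then any $B\setminus\{\prop{j}\}$ with $\prop{j}\notin B'$ would contain $B'$ and so would also be inconsistent, contradicting irredundancy.

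For the cardinality claim, I read off the list. No singleton is inconsistent, since each $\prop{i}$ lies in one of M1--M6. The two-element sets in the list are exactly the six displayed, and all other listed sets have size three or four.

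The main obstacle is not the logic, which is elementary, but relying on the exhaustiveness of the seventeen-set list. I will invoke the certified minimal-inconsistent-set cover rather than reprove it by hand.
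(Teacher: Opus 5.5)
Your proposal is correct and matches the paper's proof: both note that $E$ is inconsistent because it contains \prop{1}+\prop{2}, so $B\models E$ exactly when $B$ is inconsistent. Both then conclude that irredundant bases are the inclusion-minimal inconsistent sets and appeal to the certified seventeen-set classification; your check that single-axiom deletion suffices for minimality and your reading of the cardinality claim off the list only spell out steps the paper leaves implicit.
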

\begin{proof}
The full theory $E$ is inconsistent, already because it contains
\prop{1}+\prop{2}.  Hence $B\models E$ exactly when $B$ is inconsistent: an
inconsistent $B$ entails everything, whereas a consistent $B$ has a model and
that model cannot satisfy the inconsistent theory $E$.  Requiring each axiom of
$B$ to be essential is therefore exactly inclusion-minimal inconsistency.  The
claim follows from the seventeen-set classification above.
\end{proof}

\begin{remark}[What the consistency summaries determine]
This logical use of \emph{basis}, suggested by Jan Odelstad, should not be
confused with the \emph{Horn basis} in Section~\ref{sec:table}.  The former is
an irredundant set of property axioms entailing the full (inconsistent) theory
$E$; the latter is a generating set of sound consequence rules for computing
closures of arbitrary property sets.

Ian Hodkinson pointed out a general finite-set reason why the MCS/MIS boundary
is already enough to classify compatibility.  Let $U$ be finite, let
$\mathcal C$ be an inclusion-antichain of satisfiable subsets of $U$, and let
$\mathcal D$ consist of the inclusion-minimal subsets of $U$ that are contained
in no member of $\mathcal C$.  If every member of $\mathcal D$ is
unsatisfiable, then $\mathcal C$ is exactly the family of maximal satisfiable
subsets of $U$, and $\mathcal D$ is exactly the family of minimal
unsatisfiable subsets.  For if a satisfiable $S$ were contained in no member of
$\mathcal C$, finiteness would give a minimal subset of $S$ with that property,
hence an unsatisfiable member of $\mathcal D$, a contradiction.  The
antichain condition then gives maximality of the members of $\mathcal C$; the
dual minimality statement for $\mathcal D$ follows directly.

Applied here, this explains why M1--M6 together with the seventeen minimal
inconsistent sets determine the entire satisfiability boundary.  They do
\emph{not}, however, determine positive entailment.  For example,
\prop{3}+\prop{13} is contained only in M1, and M1 also contains \prop{7}, yet
Proposition~\ref{prop:compound-nonent}(i) shows
\[
\prop{3}+\prop{13}\catnotentails\prop{7}.
\]
The reason is that maximal positive property sets record which positive
properties can be jointly satisfied, but do not record negative literals such
as $\neg\prop{7}$.  The signed countermodel profiles used in
Section~\ref{sec:verification} carry precisely this additional information.
\end{remark}

\section{Catalogue basis and global closure status}\label{sec:table}

Table~\ref{tab:basis} gives the Horn basis used for positive-property closure
in a form that can be iterated directly.  The property-to-property part is
ordered lexicographically by antecedent after the solo rules.  To make the
degeneracy layer deductively explicit, write $\mathsf{E}$ for the assertion
$R=\varnothing$ and $\bot$ for inconsistency on the stipulated non-empty
domain.  Once $\bot$ is reached, ordinary classical consequence is explosive.

The structural rows are intentionally a \emph{basis}, not a list of every
useful degeneracy observation.  For example, \prop{1}+\prop{6} is
inconsistent and \prop{7}+\prop{8} is equivalent to $R;R=\varnothing$; these
are derivable from the displayed basis together with the propositions above.
The global Lean certificate uses an expanded 51-rule implementation in which
useful consequences concerning $R;R=\varnothing$, $\mathsf{E}$, $\bot$, and
the universal relation are made explicit.  The property-to-property part is
exactly the one displayed below; the extra structural rows are deductively
redundant consequences of this compact presentation and the propositions in
Section~\ref{sec:incompat}.

\begin{longtable}{@{}p{3.45cm}p{4.65cm}p{5.55cm}@{}}
\caption{Horn basis for arbitrary-domain closure of the fifteen properties.}
\label{tab:basis}\\
\toprule
Antecedent(s) & Consequent & Source / comment \\
\midrule
\endfirsthead
\toprule
Antecedent(s) & Consequent & Source / comment \\
\midrule
\endhead
\multicolumn{3}{@{}l}{\textit{Solo property rules}}\\
\prop{1} & \prop{3}+\prop{10} & Proposition~\ref{prop:solo} \\
\prop{4} & \prop{14} & Proposition~\ref{prop:solo} \\
\prop{6} & \prop{2}+\prop{5} & Proposition~\ref{prop:solo} \\
\prop{8} & \prop{2} & Proposition~\ref{prop:solo} \\
\prop{11} & \prop{1}+\prop{12} & Proposition~\ref{prop:solo} \\
\prop{12} & \prop{15} & Proposition~\ref{prop:solo} \\
\prop{13} & \prop{10}+\prop{14}+\prop{15} & Proposition~\ref{prop:solo} \\
\midrule
\multicolumn{3}{@{}l}{\textit{Compound property rules (lexicographic order)}}\\
\prop{1}+\prop{4}+\prop{15} & \prop{13} & Proposition~\ref{prop:p1p4p15} \\
\prop{1}+\prop{9} & \prop{11} & Proposition~\ref{prop:p1p9} \\
\prop{1}+\prop{12} & \prop{11} & Proposition~\ref{prop:p1p12p11} \\
\prop{1}+\prop{13} & \prop{4}+\prop{7} & Proposition~\ref{prop:refl-eucl} \\
\prop{1}+\prop{15} & \prop{14} & Proposition~\ref{prop:p1p15} \\
\prop{2}+\prop{5} & \prop{6} & Proposition~\ref{prop:irr-clusters} \\
\prop{2}+\prop{7} & \prop{6} & Proposition~\ref{prop:irr-clusters} \\
\prop{3}+\prop{4}+\prop{7} & \prop{1} & Proposition~\ref{prop:closure-entailments}(i) \\
\prop{3}+\prop{4}+\prop{9}+\prop{15} & \prop{12} & Proposition~\ref{prop:closure-entailments}(vii) \\
\prop{3}+\prop{7}+\prop{9} & \prop{14} & Proposition~\ref{prop:closure-entailments}(iv) \\
\prop{3}+\prop{7}+\prop{15} & \prop{14} & Proposition~\ref{prop:closure-entailments}(v) \\
\prop{3}+\prop{8}+\prop{9} & \prop{4} & Proposition~\ref{prop:closure-entailments}(ii) \\
\prop{3}+\prop{8}+\prop{15} & \prop{14} & Proposition~\ref{prop:closure-entailments}(vi) \\
\prop{4}+\prop{5} & \prop{7} & Proposition~\ref{prop:sym-antisym} \\
\prop{4}+\prop{7} & \prop{13} & Proposition~\ref{prop:sym-eucl} \\
\prop{4}+\prop{12} & \prop{9} & Proposition~\ref{prop:sym-connex} \\
\prop{4}+\prop{13} & \prop{7} & Proposition~\ref{prop:sym-eucl} \\
\prop{5}+\prop{9} & \prop{7} & Proposition~\ref{prop:p5p9} \\
\prop{5}+\prop{13} & \prop{7} & Proposition~\ref{prop:p5p13} \\
\prop{6}+\prop{9} & \prop{7} & Corollary~\ref{cor:strictweak} \\
\prop{7}+\prop{12} & \prop{9} & Proposition~\ref{prop:trans-connex} \\
\prop{8}+\prop{9}+\prop{14} & \prop{4} & Proposition~\ref{prop:closure-entailments}(iii) \\
\prop{9}+\prop{13} & \prop{7} & Proposition~\ref{prop:p9p13} \\
\prop{12}+\prop{13} & \prop{7}+\prop{9} & Proposition~\ref{prop:p12p13} \\
\midrule
\multicolumn{3}{@{}l}{\textit{Structural closure rules}}\\
\prop{1}+\prop{2} & $\bot$ & immediate \\
\prop{2}+\prop{13} & $\mathsf{E}$ & Proposition~\ref{prop:forceempty} \\
\prop{3}+\prop{7}+\prop{8} & $\bot$ & Corollary~\ref{cor:p3p7p8} \\
\prop{4}+\prop{6} & $\mathsf{E}$ & Proposition~\ref{prop:forceempty} \\
\prop{7}+\prop{8}+\prop{14} & $\mathsf{E}$ & Proposition~\ref{prop:nocomp-conv} \\
\prop{8}+\prop{10} & $\mathsf{E}$ & Proposition~\ref{prop:forceempty} \\
$\mathsf{E}$ &
$\begin{aligned}
&\prop{2}+\prop{4}+\prop{5}+\prop{6}\\
&{}+\prop{7}+\prop{8}+\prop{9}+\prop{10}\\
&{}+\prop{13}+\prop{14}+\prop{15}
\end{aligned}$ &
Proposition~\ref{prop:empty} \\
\bottomrule
\end{longtable}

\section{Conclusion}

The fifteen properties considered here do not form a simple linear hierarchy.
They organise into several overlapping mechanisms.  Diagonal information
(reflexivity or irreflexivity) interacts strongly with connexity and
Euclideanness; symmetry converts predecessor-sharing conditions into
successor-sharing ones; negative transitivity combines with antisymmetry or
Euclideanness to recover ordinary transitivity; and density is best understood
as the factorisation inequality $R\subseteq R;R$, not as strict order-density.
These mechanisms explain why apparently similar conditions can behave very
differently under composition.

The relation-algebraic translation makes that structure more transparent
without replacing the elementary first-order proofs.  At the same time, the
modal perspective explains why a pragmatic selection of relation properties is
worth studying: reflexivity, transitivity, Euclideanness, density and
convergence are not isolated textbook adjectives but frame conditions attached
to recurring modal principles.  The resulting catalogue is therefore intended
to be usable in both directions: from a list of known relational properties to
its logical closure, and from a named modal or order-theoretic class back to a
small generating set of conditions.

The global certificate provides a stronger closure guarantee than a
hand-assembled list of laws.  It checks the Horn-normalised closure of all
$32{,}768$ antecedent sets inside Lean.  Its 51 numerical rules are coupled to
semantic soundness proofs, its 49 witness rows are coupled to certified
relations and exact or residual property profiles, and every
antecedent/consequent obligation is checked by kernel reduction.  The six
maximal consistent sets and seventeen minimal inconsistent sets have an
additional independent Lean certificate.  Consequently
Theorem~\ref{thm:catalogue-complete} gives completeness for positive entailment
and compatibility among the chosen fifteen properties on non-empty domains.
The Python enumeration and Z3 searches serve as discovery and reproducibility
tools but lie outside the trust boundary for that claim.  Questions outside
this scope, like a complete catalogue of consequences peculiar to finite
domains, remain separate.

\section*{Acknowledgements}
My work on this paper began in the early 1990s.  I am grateful to my mentor and
supervisor Per-Erik Malmnäs, who also provided key pointers to the non-English
literature on the topic; to Paul Needham for several modal logic courses; and to
my fellow student Filip Widebäck for conceptual proof-theoretic advice.  My
logic teacher and later colleague Torkel Franzén provided invaluable support for
understanding what constitutes a proof.  Ian Hodkinson tirelessly explained
algebraic notions relevant to binary relations and provided valuable comments
on a late draft.  Jan Odelstad provided detailed comments from the perspectives of elementary
logic and preference theory; his question about axiom bases led to
Corollary~\ref{cor:axiom-bases}.  I take full responsibility for all errors.

\end{document}